\newcommand{\mcal}{\mathcal}
\newcommand{\msf}{\mathsf}
\newtheorem{thm}{Theorem}
\newtheorem{cor}[thm]{Corollary}
\newtheorem{lem}[thm]{Lemma}
\newtheorem{rem}{Remark}
\newtheorem{defin}{Definition}
\newtheorem*{defin*}{Definition}
\algnewcommand\algorithmicforeach{\textbf{for each}}
\begin{document}

\title{Half-Duplex Routing is NP-hard} 
\author{
\IEEEauthorblockN{Yahya H. Ezzeldin$^\dagger$, Martina Cardone$^{\dagger}$, Christina Fragouli$^{\dagger}$, Daniela Tuninetti$^*$}
$^{\dagger}$ UCLA, Los Angeles, CA 90095, USA,
Email: \{yahya.ezzeldin, martina.cardone, christina.fragouli\}@ucla.edu\\
$^*$ University of Illinois at Chicago,
Chicago, IL 60607, USA, 
Email: danielat@uic.edu
}
\IEEEoverridecommandlockouts
\maketitle
\begin{abstract}
Routing is 
a widespread approach to transfer information from a source node to a destination node
in many 
deployed wireless ad-hoc networks.
Today's implemented routing algorithms seek to efficiently find the path/route with the largest Full-Duplex (FD) capacity, which is given by the minimum among the point-to-point link capacities in the path.
Such an approach may be suboptimal 
if then the nodes in the selected path are operated in Half-Duplex (HD) mode.
%
Recently, the 
capacity (
up to a constant gap that only depends on the number of nodes in the path) of an HD line network i.e., a path) has been 
shown to be equal to half of 
the minimum of the harmonic means of the capacities of two consecutive links in the path.
This paper asks the questions of whether it is possible to design a polynomial-time algorithm that efficiently finds the path with the largest HD capacity in a relay network.
{
This problem of finding that path} is shown to be NP-hard in general. 
However, if the number of cycles in the network is polynomial in the number of nodes, 
then a polynomial-time algorithm can indeed be designed.
\end{abstract}

\section{Introduction}

In recent years 
there have been promising advances 
in the design of Full-Duplex (FD) 
transceivers~\cite{duarte2014design,everett2016softnull}.
Proposed FD designs however require 
complex self-interference cancellation techniques.
Given this, in the near future it is envisioned that nodes will continue to operate in HD mode -- as recently announced, for example, in 3GPP~Rel-13~\cite{wang2017primer}.

Today's wireless ad-hoc networks 
route information from a source node to a destination node 
through a single multi-hop path, consisting of consecutive point-to-point links. 
Routing is considered an appealing option since a route/path can be efficiently operated, while providing energy savings (since only the nodes along the path are operated)
and limiting the level of interference in the network.
A rich body of literature on 
wireless routing exists~\cite{awerbuch2004high,de2005high,broch1998performance}. 
In this paper we 
revisit the problem of HD routing in wireless 
networks starting from the following observation.

   \begin{figure}[t]
        \centering
        \includegraphics[width=0.43\textwidth]{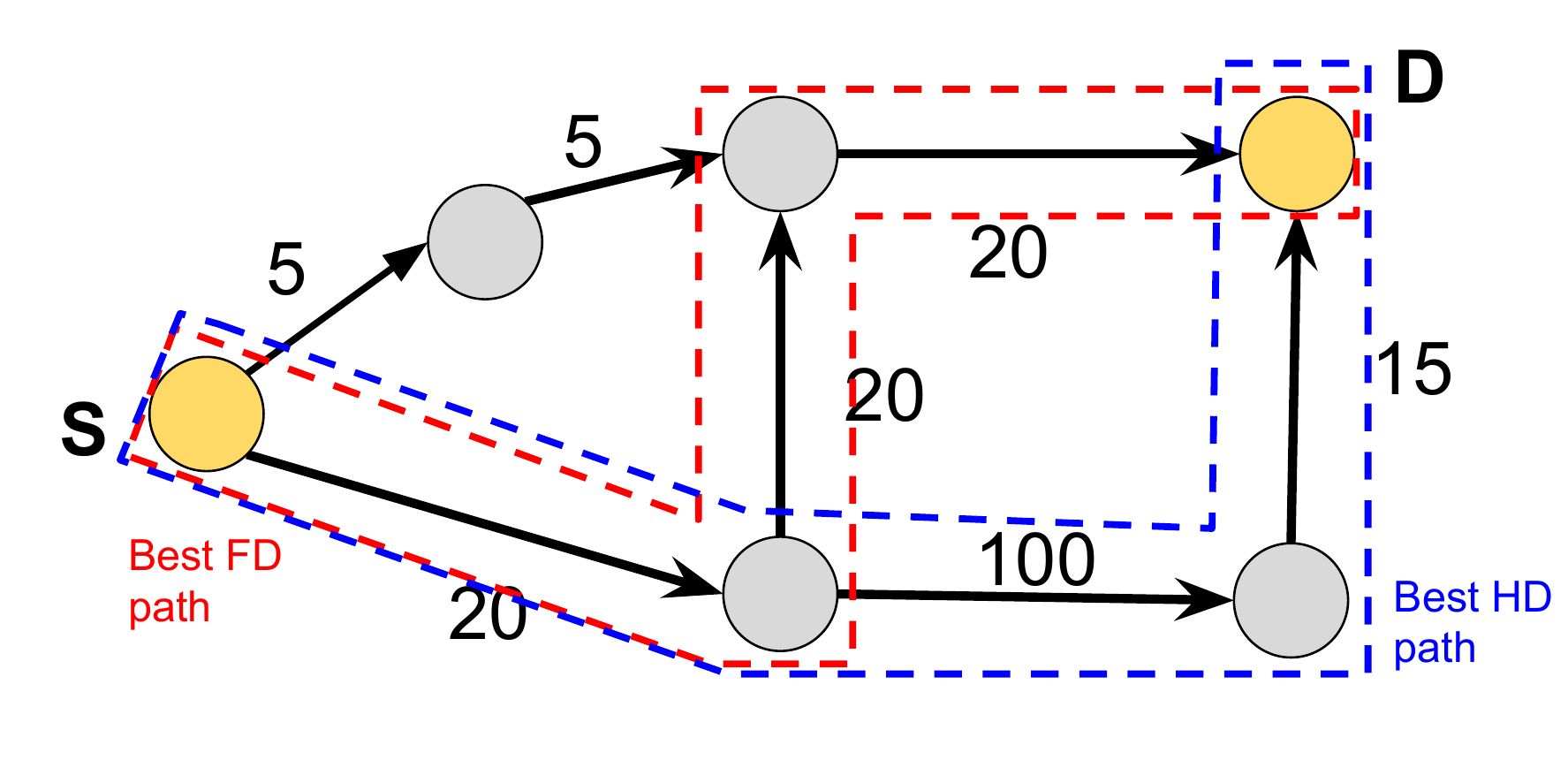}
        \caption{Example where the best FD and HD paths are different. Edge labels represent the point-to-point link capacities of the edges.}
        \label{fig:motivation_example}
    \end{figure}

Many routing algorithms seek to efficiently find the path/route with the largest FD capacity, which is given by the minimum among the point-to-point link capacities in the path.
Such an approach may be suboptimal 
if then the nodes in the selected path are operated in HD mode,
as shown by the example in Fig.~\ref{fig:motivation_example}.
Recall that, given a path that connects node $k_0$ to node $k_{N+1}$ through $N$ intermediate HD relays and point-to-point link capacities $c_{k_{i-1},k_i}$, $\forall i \in [1:N+1]$, the HD 
capacity (up to a constant gap that only depends on the number of nodes in the path) is given by~\cite{ARXIV_version} 
\begin{align}
    \msf{C} = \min_{i \in [1:N]} \left\{\frac{c_{k_{i-1},k_i}c_{k_{i},k_{i+1}}}{c_{k_{i-1},k_i} + c_{k_{i},k_{i+1}} }\right\}.
    \label{eq:cap_expressionNew}
\end{align}
%
By applying the expression in~\eqref{eq:cap_expressionNew}, we find that the best HD-route (within the blue box in Fig.~\ref{fig:motivation_example}) 
has an HD capacity of 13.04, which is 30\% higher than
the HD capacity of the best FD-route (within the red box in Fig.~\ref{fig:motivation_example}, with FD capacity of 20 but HD capacity of 10).
With best {FD (resp. HD)} route we refer to the path that has the largest {(FD resp. HD)} capacity.

The above example shows that in general it is suboptimal to find the best HD path by using as optimization metric the FD capacity of the path. 
In fact, one can show that there exist networks for which
routing based on the FD capacities yields a route 
with HD capacity equal to half 
that of the best HD route~\cite{ARXIV_submodularity}.
This observation 
naturally suggests the question: \emph{Does there exist an efficient (polynomial-time) algorithm that finds the route in a network with the best HD capacity?} 

\smallskip
\noindent{\bf{Contribution.}} 
The main result of this paper is 
 that the problem of finding the best HD route is NP-hard in general.
 Our proof is based on a reduction {from} the 3SAT problem~\cite{karp1972reducibility}.
Additionally, we provide a sufficient criterion -- based on the number of cycles in the network -- for the existence of a polynomial-time algorithm for some networks.


\smallskip
\noindent{\bf{Related Work.}} 
An HD route connecting a source node to a destination node through $N$ relays is an $N$-relay HD line network. 
For an HD line network, although the capacity is known to be given by the cut-set bound (since the line network is a degraded relay channel~\cite{aref1981information}), a closed-form expression of it as a function of the point-to-point link capacities is not {yet available}.
Recent results in~\cite{OzgurIT2013}, generalized an observation in~\cite{KramerAllerton2004}, by showing that the HD capacity of a Gaussian relay network can be approximated to within a constant gap (i.e., which is independent on the channel parameters and only depends on the number of relays $N$) by the cut-set upper bound evaluated with a deterministic schedule independent of the transmitted and received signals and with independent inputs. 
Throughout the paper, we refer to this as the \emph{approximate capacity}. 
Recently, in~\cite{ARXIV_version} we provided a closed-form characterization of the approximate capacity for a Gaussian HD line network (see the expression in~\eqref{eq:cap_expressionNew}) and we developed an algorithm to calculate the deterministic schedule needed to achieve this approximate capacity in $O(N^2)$ time.

A line of work in routing algorithms~\cite{AODV,OLSR,DSR} focused on finding a route between the source and the destination while assuming that the point-to-point link capacities can only take values of 0 or 1. 
Under this assumption, the route selection based on the FD capacities is optimal. Finding the route with the largest FD capacity is equivalent to the problem of finding the widest-path between a pair of vertices in a graph~\cite{pollack1960letter}. This can be solved efficiently (i.e., in polynomial-time) by adapting any algorithm that finds the shortest path between a pair of vertices in a graph (e.g., Dijkstra's algorithm~\cite{dijkstra1959note}).
For routing with multi-rate link capacities, several heuristic metrics have been proposed to enhance the selection of routes in an ad-hoc wireless network~\cite{awerbuch2004high,de2005high}. 

Differently, we are here interested in selecting the route with the largest HD approximate 
capacity, by also trying to
fundamentally address the complexity of finding such a path.

\smallskip
\noindent{\bf{Paper Organization.}} 
Section~\ref{sec:model} describes the setting of our problem and the known capacity results for HD line networks.
Section~\ref{sec:mainresults} proves the NP-hardness of finding the best HD route in a network.
Section~\ref{sec:poly_agorithm} describes special network classes for which a polynomial-time algorithm for finding the best HD route exists.
Section~\ref{sec:conc} concludes the paper.

\section{System Model}
\label{sec:model}
We consider a network represented by the directed graph $\mcal{G}$ where $\mcal{V}(\mcal{G})$ and $\mcal{E}(\mcal{G})$ are the set of vertices (communication nodes) and the set of edges (point-to-point links) in $\mcal{G}$, respectively. 
The point-to-point link between any two nodes is assumed to be a discrete memoryless channel.
For each edge $e \in \mcal{E}(\mcal{G})$, we represent the point-to-point link capacity with $c(e) > 0$.
Over this graph, information flows from a source node $S \in \mathcal{V}(\mathcal{G})$ to a destination node $D \in \mathcal{V}(\mathcal{G})$.
For the graph $\mathcal{G}$ with $N+2$ vertices, we denote the source vertex as $v_0$ and the destination vertex as $v_{N+1}$. 

A path $\mathcal{P} = v_{k_1}~-~v_{k_2}~-~\dots~-~v_{k_{m+1}}$ of length $m$ in $\mcal{G}$ is a sequence of  vertices $v_{k_i} \in \mathcal{V}(\mathcal{G}), \forall i \in [1:m+1]$.
An $S\mbox{-}D$ {\em simple path} in $\mathcal{G}$ is a path for which $v_{k_1} = v_0$ and $v_{k_m+1} = v_{N+1}$ {and all $m+1$ vertices in $\mathcal{P}$ are distinct}, i.e., there are no cycles in $\mathcal{P}$.
The HD approximate capacity of the $S\mbox{-}D$ simple path $\mcal{P}$ is~\cite{ARXIV_version} 
\begin{align}
    \mathsf{C}_{\mathcal{P}} = \min_{i\in[2:m]}\left\{\frac{ c(e_{k_{i-1},k_{i}})\ c(e_{k_{i},k_{i+1}})}{ c(e_{k_{i-1},k_{i}})+ c(e_{k_{i},k_{i+1}}) }  \right\},
            \label{eq:HD_capacity}
        \end{align}
where $e_{k_{i-1},k_{i}}, i \in [1:m+1],$ represents the edge from node $v_{k_{i-1}} \in \mathcal{P}$ to node $v_{k_i} \in \mathcal{P}$.
The approximate capacity expression in~\eqref{eq:HD_capacity} is half of the minimum harmonic mean of the capacities of each two consecutive edges in $\mathcal{P}$.

\begin{rem}
{\rm 
The expression in~\eqref{eq:HD_capacity} was derived in~\cite{ARXIV_version} in the context of Gaussian noise networks. However, the analysis directly extends if we replace each of the Gaussian point-to-point link with any discrete memoryless channel. 
This holds since the cut-set bound is tight for any HD line network (i.e., degraded relay network~\cite{aref1981information}) where point-to-point links are discrete memoryless channels assuming we allow for dynamic schedules~\cite{KramerAllerton2004}.
As a result, deterministic schedules would only reduce the capacity by at most 1 bit per relay, thus providing the same constant-gap approximation needed in~\cite{ARXIV_version}. 
} 
\end{rem}


    \section{HD Routing is NP-hard}
\label{sec:mainresults}
IN this section, our goal in this section is to prove that the problem of finding the best HD route in a network is NP-hard. 
Towards this end, we start by showing that, if we want to find the path $\mathcal{P}$ with the largest value of $\mathsf{C}_{\mathcal{P}}$ in~\eqref{eq:HD_capacity}, then it is necessary to restrict our search over {\it simple} paths.

\subsection{Non-simple paths are misleading in HD}\label{sec:misleading_cyclic}
    Practically, a communication route through a network is expected to be a simple path, i.e, a path that contains no cycles.
    This is due to the fact that for a non-simple path, e.g.,  $\mcal{P}_{\rm cyclic} = S-v_{1}-v_2-\dots-v_m-v_2-D$, we know that -- from the degraded nature of the network -- the information sent from $v_m$ to $v_2$ is a noisy version of the information that is already available at $v_2$ (since $v_2$ appeared earlier in the path).
    Thus, for the simple path $\mcal{P}_{\rm simple} = S-v_1-v_2-D$, we fundamentally have that 
    \begin{align}
        \msf{C}_{\mcal{P}_{\rm cyclic}} \leq \msf{C}_{\mcal{P}_{\rm simple}}.
        \label{eq:simple_vs_cyclic}
    \end{align}
     This observation is true for both FD and HD paths in the network and therefore the best path (in FD or HD) is naturally a simple path.
     When routing using the FD capacities (to select the best FD route), this observation turns out to be just a technicality since the expression for the FD capacity already exhibits the fundamental property described in~\eqref{eq:simple_vs_cyclic}. 
         Particularly, we have that $\mcal{E}(\mcal{P}_{\rm simple}) \subseteq \mcal{E}(\mcal{P}_{\rm cyclic})$, which directly implies that
         \begin{align*}
             \mathsf{C}^{\rm FD}_{\mcal{P}_{\rm cyclic}} {=} \min_{e \in \mcal{E}(\mcal{P_{\rm cyclic}})}\left\{ c(e)\right\} \leq 
             \min_{e \in \mcal{E}(\mcal{P_{\rm simple}})}\left\{ c(e)\right\} = \mathsf{C}^{\rm FD}_{\mcal{P}_{\rm simple}}.
         \end{align*}
         Thus, an algorithm that selects a route in FD can end up with {either} type of paths (simple or cyclic). If the path is cyclic, then we can prune it to get a simple path while ensuring that pruning can only improve the computed capacity.

   \begin{figure}[t]
        \centering
        \includegraphics[width=0.5\textwidth]{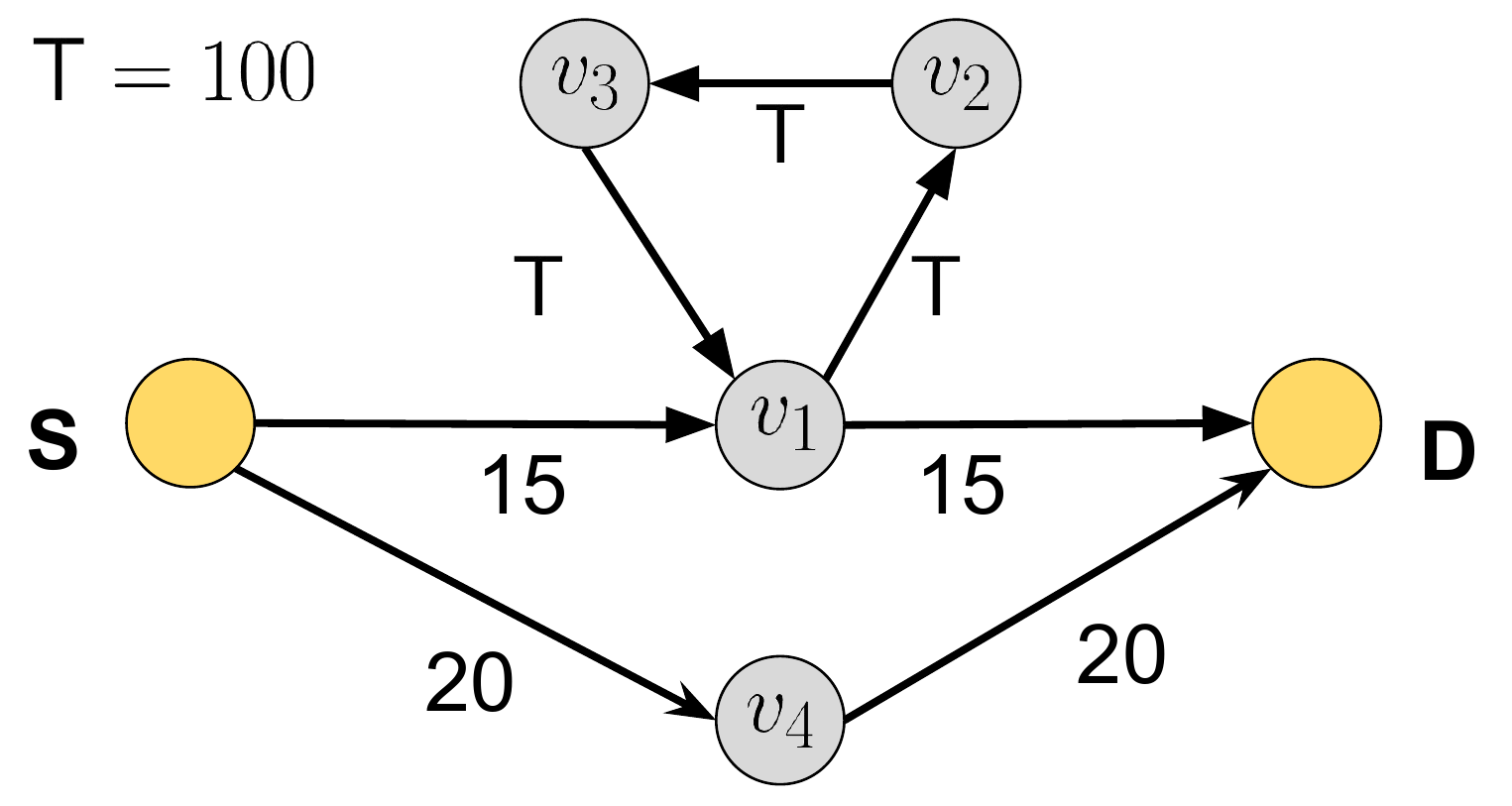}
        \caption{A network example in which a non-simple path can appear to have a larger HD approximate capacity than its simple subpath.}
        \label{fig:cycles}
    \end{figure}

Differently, for HD routing, it is very important to restrict ourselves to searching over simple paths as the HD approximate capacity expression in~\eqref{eq:HD_capacity} only applies to simple paths. Furthermore, applying the expression in~\eqref{eq:HD_capacity} to a path with a cycle can actually increase the approximate capacity (in contradiction to the fundamental property in~\eqref{eq:simple_vs_cyclic}).
     To illustrate this, consider the network example shown in Fig.~\ref{fig:cycles}.
     From Fig.~\ref{fig:cycles}, we now focus on the two paths: the simple path $\mcal{P}_1 = S - v_{1} - D$ and the non-simple path $\mcal{P}_2 = S - v_1 - v_2 - v_3 - v_1 - D$.
     Note that $\mcal{P}_1$ is a simple path and $\mcal{P}_2$ is a cyclic extension of $\mcal{P}_1$ by adding the cycle $v_1 - v_2 - v_3 - v_1$. 
     If we apply the expression in~\eqref{eq:HD_capacity} on both paths, we get the value equal to $7.5$  for $\mcal{P}_1$ and for $\mcal{P}_2$ we get $13.05$.
     Thus, if an algorithm is allowed to consider non-simple paths, then it would output the path $\mcal{P}_2$ even though we know fundamentally that $\mathsf{C}_{\mcal{P}_1} \geq \mathsf{C}_{\mcal{P}_2}$. 
     This is the first major problem that arises when we allow an algorithm to output non-simple paths based on the expression in~\eqref{eq:HD_capacity}.
     The second problem arises when we observe that $\mcal{P}_3 = S - v_4 - D$ in Fig.~\ref{fig:cycles} is actually the best HD simple path from $S$ to $D$.
     However, since applying~\eqref{eq:HD_capacity} for $\mcal{P}_2$ yields 13.05, which is larger than what we get for $\mcal{P}_3$ (i.e., $10$), then the algorithm will output a non-simple path $\mcal{P}_2$ which when pruned does not yield the best HD path. Thus, an algorithm designed with the goal to find the best HD path needs to be aware of the type of paths that it processes. 
In other words, we can no longer rely on pruning non-simple paths that an algorithm outputs as these non-simple paths in HD can mislead the algorithm into not selecting the best HD path as illustrated in this example.

     As a consequence of the above discussion, in the rest of the section, we focus on the problem of finding the simple (i.e., acyclic) path with the largest HD approximate capacity.


     \subsection{Find the best HD simple path is NP-hard}
Our goal in this subsection, is to prove that the \emph{search} problem of finding the $S\mbox{-}D$ simple path with the largest HD approximate capacity in a graph is NP-hard. Towards proving this, we first show that the related decision problem ``HD-Path'', which is defined below, is NP-complete. 

   \begin{figure*}[t]
        \centering
        \includegraphics[width=\textwidth]{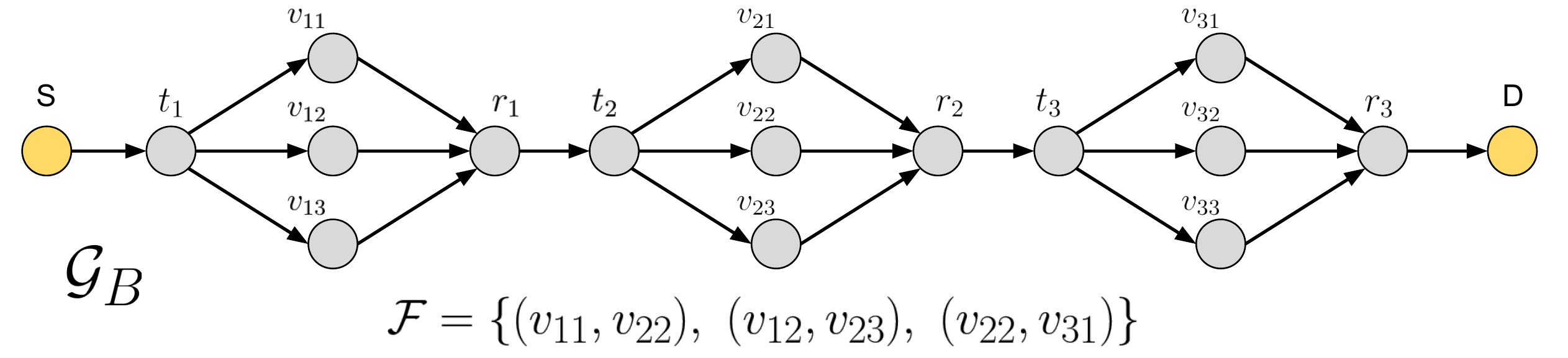}
        \caption{Graph ${\mathcal{G}}_B$ and set of forbidden pairs $\mathcal{F}$ for the boolean expression in~\eqref{eq:boolexp}.
}
        \label{fig:graph_3SAT}
    \end{figure*}

\begin{defin}[HD-Path problem]
\label{def:decPr}
Given a directed graph $\mathcal{G}$ and a scalar $\mathsf{Z}>0$, does there exist an $S\mbox{-}D$ simple path in $\mcal{G}$ with an HD approx. capacity greater than or equal to $\mathsf{Z}$?
\end{defin} 
Since the decision problem defined above can be reduced in polynomial-time to finding the $S-D$ simple path with the largest HD approximate capacity, then by proving the NP-completeness of the decision problem, we also prove that the search problem is NP-hard.

The HD-Path problem is NP because, given a guess for a path, we can verify in polynomial-time whether it is simple (i.e., no repeated vertices) and whether its HD approximate capacity is greater than or equal to $\mathsf{Z}$ by simply evaluating the expression in~\eqref{eq:HD_capacity}.

To prove the NP-completeness of the HD-Path problem, we now show that the classical 3SAT problem (which is NP-complete)~\cite{karp1972reducibility} can be reduced in polynomial-time to the HD-Path decision problem in Definition~\ref{def:decPr}.
For {\rm 3SAT}, we are given a boolean expression $B$ in 3-conjunctive normal form,
\begin{align}
    B(x_1,x_2,\dots,x_n) =& (p_{11} \vee p_{12} \vee p_{13}) \wedge (p_{21} \vee p_{22} \vee p_{23})\nonumber \\ 
    &\wedge \dots \wedge (p_{m1} \vee p_{m2} \vee p_{m3}),
    \label{eq:3SAT_expr}
\end{align}
where: (i) $B$ is a conjunction of $m$ clauses $\{C_1,C_2,\dots,C_m\}$, each containing
a disjunction of three literals
and (ii) a literal $p_{ij}$ is either a boolean variable $x_k$ or its negation $\bar{x}_k$ for some $k \in [1:n]$.
The boolean expression $B$ is \emph{satisfiable} if the variables $x_{[1:n]}$ can be assigned boolean values so that $B$ is true.
The 3SAT problem answers the question: \emph{Is the given $B$ satisfiable?}
We next prove the main result of this section through the following lemma.
\begin{lem}
\label{lemm:lemmMain}
    There exists a polynomial-time reduction from the {\rm 3SAT} problem to the {\rm HD-Path} problem.
\end{lem}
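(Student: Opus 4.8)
The plan is to exhibit a polynomial-time mapping that sends a 3SAT instance $B$ to an HD-Path instance $(\mathcal{G}_B,\mathsf{Z})$ with the property that $B$ is satisfiable if and only if $\mathcal{G}_B$ admits an $S$-$D$ simple path of HD approximate capacity at least $\mathsf{Z}$. The graph $\mathcal{G}_B$ (illustrated in Fig.~\ref{fig:graph_3SAT}) is built from gadgets strung in series between $S$ and $D$: a \emph{clause chain}, in which the path passes through the clauses $C_1,\dots,C_m$ one after another and, inside each clause gadget, commits to exactly one of its three literals (the literal used to satisfy that clause); and a \emph{consistency} mechanism that must rule out a path committing both to a literal $x_k$ in one clause and to $\bar x_k$ in another. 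The latter constraints are exactly the forbidden pairs $\mathcal{F}$ drawn in Fig.~\ref{fig:graph_3SAT}: for every variable $x_k$ and every pair of clauses, one containing $x_k$ and the other $\bar x_k$, the two corresponding literal-edges form a forbidden pair, and a choice of one satisfying literal per clause yields a satisfying assignment of $B$ precisely when it avoids every forbidden pair.

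The second ingredient is the capacity assignment, where the HD metric does its work. Because the HD approximate capacity in~\eqref{eq:HD_capacity} is a minimum over \emph{consecutive} edge pairs of their scaled harmonic mean, it can only penalize a \emph{local} event --- a transition $u \to v \to w$ --- by driving $\tfrac{c(e_{u,v})\,c(e_{v,w})}{c(e_{u,v})+c(e_{v,w})}$ below $\mathsf{Z}$. I would therefore use only two capacity values, ``high'' and ``low'', and pick $\mathsf{Z}$ so that a consecutive pair clears $\mathsf{Z}$ exactly when both its edges are ``high'' and falls below $\mathsf{Z}$ as soon as one is ``low''; this turns ``forbidden transition'' into ``capacity below $\mathsf{Z}$'', and lets every intended, consistent path keep all its consecutive pairs at the high value, hence HD capacity $\geq \mathsf{Z}$. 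One technical point is that at a branching vertex the statuses of its transitions form a product pattern --- a transition clears $\mathsf{Z}$ iff both its edges are high --- so to forbid an arbitrary subset of transitions I would split the vertex, routing each admissible transition through a private degree-two relay whose two incident capacities can then be set in isolation. The gadgets together with this splitting add only polynomially many vertices and edges, and the two values and $\mathsf{Z}$ use polynomially many bits, so the reduction runs in polynomial time.

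With the construction in place, correctness is two implications. If $B$ is satisfiable, fix a satisfying assignment, pick in each clause a literal it makes true, and follow the corresponding edges; consistency guarantees no forbidden pair is hit, every consecutive pair is ``high--high'', so the path is simple with HD capacity $\geq \mathsf{Z}$. Conversely, a simple $S$-$D$ path of HD capacity $\geq \mathsf{Z}$ can never take a below-threshold transition, so it respects every forbidden-pair gadget and therefore reads off a consistent, clause-satisfying choice of literals, i.e.\ a satisfying assignment. It is worth stressing that the simple-path requirement is load-bearing here: forbidden transitions alone, with vertex repeats permitted, reduce to reachability in a line-graph and are polynomial, so the hardness must come from the interplay of forbidden transitions with acyclicity, exactly as motivated in Section~\ref{sec:misleading_cyclic}.

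The main obstacle I anticipate is precisely the mismatch between the \emph{global} consistency constraints of 3SAT --- the forbidden pairs, which relate literal-edges in arbitrarily far-apart clauses --- and the \emph{local}, consecutive-edge nature of the HD metric, which can only penalize a transition through a single shared vertex. Bridging this gap is the heart of the construction: it requires routing the path so that each forbidden pair is enforced by combining a below-$\mathsf{Z}$ local transition with the no-repeat condition at a shared gadget vertex, so that committing to both offending literals is impossible, and then verifying that these gadgets coexist --- i.e.\ that enforcing one forbidden pair never inadvertently blocks a legitimate path through another. Getting this interaction right, while keeping all capacities in a fixed two-value set so that a single threshold $\mathsf{Z}$ separates ``allowed'' from ``forbidden'' uniformly across the whole graph, is where the real care is needed.
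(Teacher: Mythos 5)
Your scaffolding --- clause gadgets strung in series between $S$ and $D$, a commitment to one literal per clause, and forbidden pairs for complementary literals --- is the same as Steps 1 and 2 of the paper's proof, and your closing remark that hardness must come from combining below-threshold transitions with the no-repeat condition at a shared vertex correctly names the crux. The gap is that you never construct the gadget that does this, and the concrete plan you do commit to cannot be carried out. For a forbidden pair to be enforced by simplicity, its two literals must share a vertex $f$ with incoming edges $g_1,g_2$ and outgoing edges $h_1,h_2$ (one in/out pair per literal), where the two \emph{straight} transitions $(g_1,h_1),(g_2,h_2)$ must clear $\mathsf{Z}$ and the two \emph{crossing} transitions $(g_1,h_2),(g_2,h_1)$ must fall below $\mathsf{Z}$. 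By \eqref{eq:HD_capacity} a transition clears $\mathsf{Z}$ iff $1/c(g)+1/c(h)\leq 1/\mathsf{Z}$; summing this over the two straight transitions gives $1/c(g_1)+1/c(g_2)+1/c(h_1)+1/c(h_2)\leq 2/\mathsf{Z}$, while summing the strict reverse inequality over the two crossings gives the \emph{same} left-hand side $>2/\mathsf{Z}$. So no capacity assignment whatsoever --- two-valued or otherwise --- forbids both crossings while allowing both straights: your dictionary ``forbidden transition $=$ capacity below $\mathsf{Z}$'' is unrealizable exactly at the vertices where it is needed.

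Your proposed repair, splitting the vertex and routing each admissible transition through a private degree-two relay, removes this obstruction but also removes the theorem: once the two straight transitions use disjoint relays, a simple path can traverse both of them, committing to $x_k$ in one clause and $\bar{x}_k$ in another. Then choosing an arbitrary literal in every clause always yields a simple $S$-$D$ path whose consecutive pairs are all at the high value, so every constructed instance is a YES instance of HD-Path regardless of whether $B$ is satisfiable, and the reduction is unsound. The paper resolves this dilemma asymmetrically, and that asymmetry is the actual content of the lemma: it keeps $f$ shared, assigns $c(g_1)=c(h_2)=1.5\mathsf{Z}$ and $c(g_2)=c(h_1)=3\mathsf{Z}$, so that both straight transitions sit exactly at $\mathsf{Z}$ and one crossing (value $0.75\mathsf{Z}$) is killed by capacity, and then kills the remaining crossing --- which has value $1.5\mathsf{Z}$ and is invisible to the metric --- by the series structure of the clause chain: that crossing jumps backward from clause $k$ to clause $i<k$, and any simple $S$-$D$ path that reached clause $k$ has already visited $r_i$ and would have to visit it again, a contradiction with simplicity. (Making this argument clean also requires the paper's intermediate step of duplicating vertices so that each vertex lies in at most one forbidden pair --- a multiplicity issue your sketch does not address.) In short, you have the right frame, but the mechanism you propose is provably impossible on one branch and unsound on the other; the missing gadget is not a detail to be checked but the heart of the proof.
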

\begin{proof}
    To prove this statement, we are going to create a sequence of graphs based on the boolean statement $B$ given to the 3SAT problem.
    In each of these graphs, we show that the existence of a satisfying assignment for $B$ is equivalent to a particular feature in the graph.
Finally, we construct
an HD network where the feature equivalent to a satisfying assignment of $B$ is to find {a simple path with HD approximate} capacity greater than or equal to $\msf{Z}$.
In particular, our proof follows four steps of graph constructions, which are explained in details in what follows.
To illustrate these four steps we use the following boolean expression as a running example:
\begin{align}
\label{eq:boolexp}
B = (\bar{x}_{1} \vee x_2 \vee x_3)\wedge(x_4 \vee x_1 \vee \bar{x}_2) \wedge (\bar{x}_1 \vee x_3 \vee \bar{x}_5),
\end{align}
where, with the notation in~\eqref{eq:3SAT_expr}, the literals are assigned as
\begin{subequations}
\label{eq:exlitas}
\begin{align}
&(p_{11},p_{12},p_{13}) = (\bar{x}_1, x_2, x_3), 
\\&(p_{21},p_{22},p_{23}) = (x_4,x_1,\bar{x}_2),
\\&(p_{31},p_{32},p_{33}) = (\bar{x}_1,x_3,\bar{x}_5).
\end{align}
\end{subequations}

\noindent {\bf 1)}  Assume that the boolean expression $B$ is made of $m$ clauses. 
    For each clause $C_i, i \in [1:m],$ in $B$, construct a gadget digraph $\mathcal{G}_i$ with vertices $\mathcal{V}(\mathcal{G}_i) =\{t_i,v_{i1},v_{i2},v_{i3},r_i\}$ and edges $\mathcal{E}(\mathcal{G}_i) = \bigcup_{j=1}^3 \left \{ e_{t_i ,v_{ij}}, e_{v_{ij}, r_i} \right \}$.
    Now we connect the gadget graphs $\mathcal{G}_i, i \in [1:m],$ by adding directed edges $e_{r_i, t_{i+1}}$,
$\forall i \in [1:m{-}1]$. 
    Finally, we introduce a source vertex $S$ and a destination vertex $D$ and the directed edges $e_{S, t_1}$ and $e_{r_m,D}$.
    We denote this new graph construction by $\mathcal{G}_B$.
    Note that each vertex $v_{ij}$ in $\mathcal{G}_B$ represents a literal $p_{ij}$ in the boolean expression $B$. 
We call a pair of vertices $(v_{ij},v_{k\ell})$ in $\mathcal{G}_B$, with $i < k$, as \emph{forbidden} if $p_{ij} = \widebar{p_{k\ell}}$ in $B$.

    Let $\mcal{F}$ be the set of all such forbidden pairs. 
Consider an $S\mbox{-}D$ path $\mathcal{P} = S - t_1 - v_{1\ell_1} - r_1 - t_2 - \dots - v_{m\ell_m} - r_m - D$ in the graph $\mathcal{G}_B$ that contains at most one vertex from any forbidden pair in $\mcal{F}$. 
    Using the indexes characterizing the path $\mathcal{P}$, if we set the literals $p_{i\ell_i}$ to be true $\forall i \in [1:m]$, then this is a valid assignment (since, by our definition, $\mathcal{P}$ avoids all forbidden pairs in $\mcal{F}$). 
    Additionally, since we set one literal to be true in each clause, then this assignment satisfies $B$. Hence the existence of a path $\mathcal{P}$ in $\mathcal{G}_B$ that avoids forbidden pairs implies that $B$ is satisfiable.
    Similarly, we can show that if $B$ is satisfiable, then we can construct a path that avoids forbidden pairs in $\mathcal{G}_B$ using any assignment that satisfies $B$.

        \smallskip
\noindent    {\bf Example.} 
The boolean expression in~\eqref{eq:boolexp} has $m=3$ clauses. Hence, we construct $3$ gadget digraphs that are connected to form $\mathcal{G}_B$ as represented in Fig.~\ref{fig:graph_3SAT}.
Since each vertex $v_{ij}, i \in [1:m],\ j\in [1:3]$, in $\mathcal{G}_B$ represents a literal $p_{ij}$ in the boolean expression in~\eqref{eq:boolexp} (i.e., $p_{ij}=v_{ij}$) and the literals are assigned as described in~\eqref{eq:exlitas}, then the set of forbidden pairs is given by
\begin{align}
\mathcal{F} = \left \{ (v_{11},v_{22}), (v_{12},v_{23}),(v_{22},v_{31}) \right \}
\label{eq:forbpairs}
\end{align}
as also shown in Fig.~\ref{fig:graph_3SAT}.

%

\medskip

\begin{figure*}[t]
        \centering
        \includegraphics[width=\textwidth]{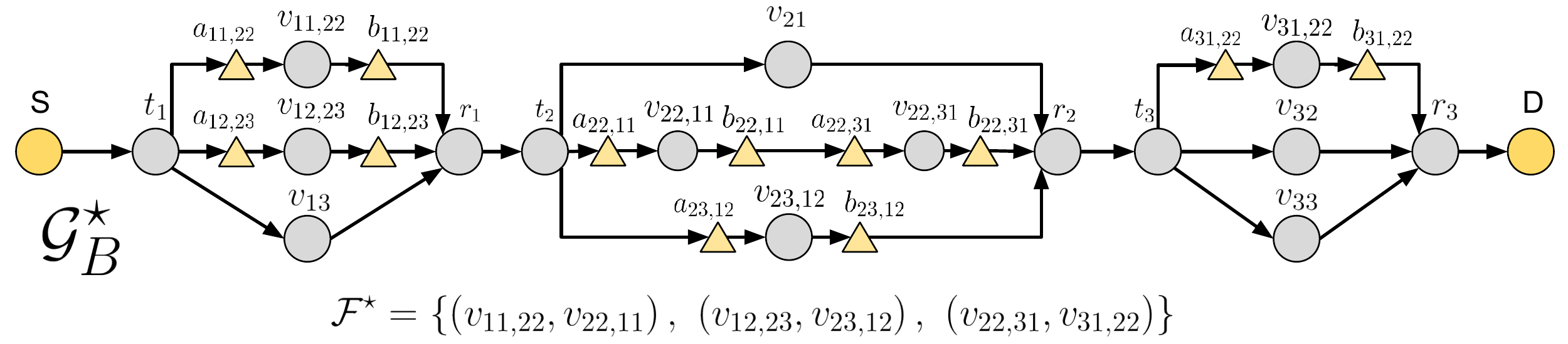}
        \caption{Graph $\mathcal{G}^\star_B$ and set of forbidden pairs $\mathcal{F}^\star$. The graph $\mathcal{G}^\star_B$ is constructed from $\mathcal{G}_B$ in Fig.~\ref{fig:graph_3SAT}.}
        \label{fig:graph_3SAT_modified1}
    \end{figure*}
\noindent {\bf 2)}    Next we modify the set of forbidden pairs $\mcal{F}$ and the graph $\mathcal{G}_B$ such that each vertex appears at most once in $\mcal{F}$.
    For each vertex $v_{ij}$ that appears in at least one forbidden pair of $\mcal{F}$, define $\mathcal{V}_{\mcal{F}}(v_{ij}) =\{v_{i'j'} \in \mathcal{V} (\mathcal{G}_{B}) | (v_{ij},v_{i'j'}) \in \mcal{F} \}$.
Then, for each $\mathcal{V}_{\mcal{F}}(v_{ij})$, we create $\left | \mathcal{V}_{\mcal{F}}(v_{ij}) \right |$ vertices and we label them as $v_{ij,k\ell}$, $\forall v_{k\ell} \in \mathcal{V}_{\mcal{F}}(v_{ij})$.
We finally replace the vertex $v_{ij}$ in $\mathcal{G}_B$ with a path connecting the vertices $v_{ij,k\ell}$, $\forall v_{k\ell} \in \mathcal{V}_{\mcal{F}}(v_{ij})$. 
    We denote this new graph as $\mathcal{G}_B^{\circ}$.
The new set of forbidden pairs $\mcal{F}^\circ$ is defined based on the set $\mcal{F}$ as $\mcal{F}^\circ = \left\{\left. \left(v_{ij,k\ell}, v_{k\ell,ij}\right) \right|\ (v_{ij},v_{k\ell}) \in \mcal{F}  \right\}$.
    Note that, for this new set of forbidden pairs, each vertex in $\mathcal{G}^\circ_B$ appears in at most one forbidden pair.
    Let $\mathcal{V}_{\mcal{F}^\circ}$ be the set of vertices that appear in $\mcal{F}^\circ$.
    Then $\forall v_{ij,kl} \in \mathcal{V}_{\mcal{F}^\circ}$, we replace $v_{ij,kl}$ with a path that consists of three vertices. 
    In particular, for any vertex $v_{ij,k\ell} \in \mathcal{V}_{\mcal{F}^\circ}$, we replace it with a directed path $a_{ij,kl} -  v_{ij,kl} - b_{ij,kl}$. 
    We call this new graph $\mathcal{G}_B^{\star}$ and the forbidden pair set $\mcal{F}^\star = \mcal{F}^\circ$.
    The newly introduced vertices $a_{ij,k\ell}$ and $b_{ij,k\ell}$ are called \emph{a-type} and \emph{b-type} vertices, respectively.
    
Similar to our earlier argument for $\mathcal{G}_B$, note that a path in $\mathcal{G}_B^\star$ that avoids forbidden pairs in $\mcal{F}^\star$ gives a valid satisfying assignment for the boolean argument $B$.
In the reverse direction, if we have an assignment that satisfies $B$, then by taking one true literal from each clause $C_i$, $i\in [1:m]$, we can choose $t_i-r_i$ paths that avoid forbidden pairs. By connecting these paths together, we get an $S\mbox{-}D$ path in $\mathcal{G}_B^\star$ that avoids forbidden pairs.

\smallskip

\noindent {\bf Example.} For our running example, given the set of forbidden pairs $\mathcal{F}$ in~\eqref{eq:forbpairs}, we have
\begin{align*}
&\mathcal{V}_{\mathcal{F}} (v_{11}) = \left \{ v_{22} \right \} \implies v_{11} \leftarrow v_{11,22},
\\
&\mathcal{V}_{\mathcal{F}} (v_{22}) = \left \{ v_{11},v_{31}\right \} \implies v_{22} \leftarrow v_{22,11} - v_{22,31},
\\
&\mathcal{V}_{\mathcal{F}} (v_{12}) = \left \{ v_{23}\right \} \implies v_{12} \leftarrow v_{12,23},
\\
&\mathcal{V}_{\mathcal{F}} (v_{23}) = \left \{ v_{12}\right \} \implies v_{23} \leftarrow v_{23,12},
\\
&\mathcal{V}_{\mathcal{F}} (v_{31}) = \left \{ v_{22}\right \} \implies v_{31} \leftarrow v_{31,22},
\end{align*}
where $y \leftarrow \mathcal{Y}$ indicates that in $\mathcal{G}_B^{\circ}$ the vertex $y$ is replaced by the path $\mathcal{Y}$.
The set of forbidden pairs $\mcal{F}^\circ$ is then given by
\begin{align}
\mcal{F}^\circ = \left \{ (v_{11,22},v_{22,11}),(v_{22,31},v_{31,22}),(v_{12,23},v_{23,12}) \right \}
\label{eq:Fcirc}
\end{align}
and hence $\mathcal{V}_{\mcal{F}^\circ} = \left \{  v_{11,22},v_{22,11},v_{22,31},v_{31,22},v_{12,23},v_{23,12} \right \}$.
Given this, we can now construct the graph $\mathcal{G}_B^{\star}$ by replacing any vertex inside $\mathcal{V}_{\mcal{F}^\circ}$ as follows
\begin{align*}
&v_{11,22} \leftarrow a_{11,22} -  v_{11,22} - b_{11,22},
\\
&v_{22,11} \leftarrow a_{22,11} -  v_{22,11} - b_{22,11},
\\
&v_{22,31} \leftarrow a_{22,31} -  v_{22,31} - b_{22,31},
\\
&v_{31,22} \leftarrow a_{31,22} -  v_{31,22} - b_{31,22},
\\
&v_{12,23} \leftarrow a_{12,23} -  v_{12,23} - b_{12,23},
\\
&v_{23,12} \leftarrow a_{23,12} -  v_{23,12} - b_{23,12},
\end{align*}
as shown in Fig.~\ref{fig:graph_3SAT_modified1}. Furthermore, we have $\mcal{F}^\circ = \mcal{F}^\star$, where $\mcal{F}^\circ$ is defined in~\eqref{eq:Fcirc}.

\medskip

\noindent {\bf 3)} Our next step is to modify $\mathcal{G}_B^\star$ to incorporate $\mcal{F}^\star$ directly into the structure of the graph. 
For each $(v_{ij,k\ell},v_{k\ell,ij}) \in \mcal{F}^\star$ introduce a new vertex $f_{ij,k\ell}$ to replace $v_{ij,k\ell}$ and $v_{k\ell,ij}$. 
All edges that were incident from (to) $v_{ij,k\ell}$ and $v_{k\ell,ij}$ are now incident from (to) $f_{ij,k\ell}$.
We call these newly introduced vertices as \emph{f-type} vertices and denote this new graph as $\mathcal{G}^\bullet_B$.
Note that in $\mathcal{G}^\bullet_B$, we now have incident edges from $a_{ij,k\ell}$ and $a_{k\ell,ij}$ to $f_{ij,k\ell}$ and edges incident from $f_{ij,k\ell}$ to vertices $b_{ij,k\ell}$ and $b_{k\ell,ij}$.
A path in $\mathcal{G}^\star_B$ that avoids forbidden pairs in $\mcal{F}^\star$ gives a path in $\mathcal{G}^\bullet_B$ that follows the following rules:
\begin{enumerate}
    \item {\bf Rule 1:} If any \emph{f-type} vertex is visited, then it is visited at most once;
    \item {\bf Rule 2:} If an \emph{f-type} vertex is visited then the preceding \emph{a-type} vertex and the following \emph{b-type} vertex both share the same index (i.e., we do not have $a_{ij,k\ell} - f_{ij,k\ell} - b_{k\ell,ij}$ or $a_{k\ell,ij} - f_{ij,k\ell} - b_{ij,k\ell}$ as a subpath of our path in $\mathcal{G}_B^\bullet$).
\end{enumerate}

It is not difficult to see that an $S\mbox{-}D$ path in $\mathcal{G}_B^\bullet$ that abides to the two aforementioned rules represents a feasible path that avoids forbidden pairs $\mcal{F}^\star$ in $\mathcal{G}_B^\star$. Specifically,
this can be seen by treating the subpath ($a_{ij,k\ell} - f_{ij,k\ell} - b_{ij,k\ell}$) in $\mathcal{G}_B^\bullet$ as the subpath ($a_{ij,k\ell} - v_{ij,k\ell} - b_{ij,k\ell}$) in $\mathcal{G}^\star_B$ and similarly ($a_{k\ell,ij} - f_{ij,k\ell} - b_{k\ell,ij}$) for ($a_{k\ell,ij} - v_{k\ell,ij} - b_{k\ell,ij}$).
In other words, the problem of finding a path in $\mathcal{G}_B^\star$ that avoids forbidden pairs in $\mcal{F}^\star$ is equivalent to finding a path in $\mathcal{G}_B^\bullet$ that satisfies Rule 1 and Rule 2.

        \smallskip
        \noindent {\bf Example.} For our running example, the graph $\mathcal{G}_B^\bullet$ is shown in Fig.~\ref{fig:graph_3SAT_modified2}. 
In particular, $\mathcal{G}_B^\bullet$ is constructed from $\mathcal{G}^\star_B$ in Fig.~\ref{fig:graph_3SAT_modified1}, where each $v_{ij,k\ell} \in \mathcal{V}(\mathcal{G}_B^\star)$ and $v_{k\ell,ij}  \in \mathcal{V}(\mathcal{G}_B^\star)$ such that $(v_{ij,k\ell},v_{k\ell,ij}) \in \mcal{F}^\star$, with $\mcal{F}^\star$ being defined in~\eqref{eq:Fcirc}, is now replaced by $f_{ij,k\ell}$ in $\mathcal{G}_B^\bullet$, which is connected to the other nodes as explained above.

\medskip

\noindent {\bf 4)} Our next step is to modify $\mathcal{G}^\bullet_B$ by introducing edge capacities.
For any edge $e \in \mathcal{E}(\mathcal{G}_B^\bullet)$ that is not incident from or to an \emph{f-type} vertex, we set the capacity of that edge  $c(e) = 3\msf{Z}$.
For an \emph{f-type} vertex $f_{ij,k\ell}$, let $g_1$ and $h_1$ be the edges incident to it from $a_{ij,k\ell}$ and incident from it to $b_{ij,k\ell}$, respectively. Similarly, let $g_2$ and $h_2$ be the edges incident from $a_{k\ell,ij}$ and to $b_{k\ell,ij}$, respectively.
Then, we set the edge capacities of these edges as
\begin{align*}
c(g_1) = c(h_2) = 1.5\msf{Z},\quad c(g_2) = c(h_1) = 3\msf{Z}.
\end{align*}

    \begin{figure*}[t]
        \centering
        \includegraphics[width=\textwidth]{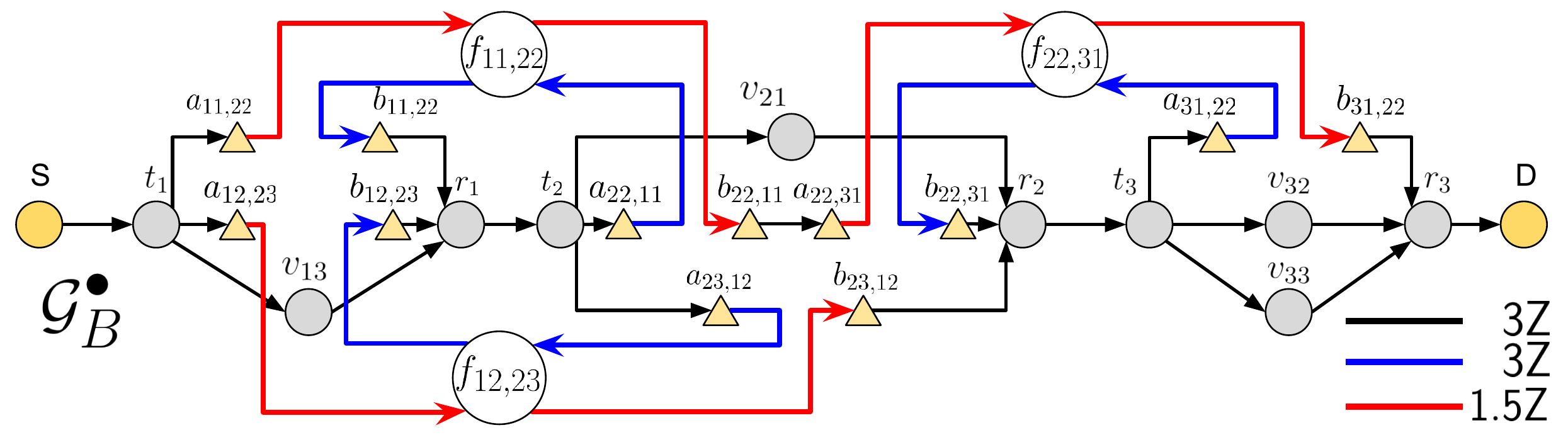}
        \caption{Graph $\mathcal{G}^\bullet_B$ and the associated edge capacities. The graph $\mathcal{G}^\bullet_B$ is constructed from $\mathcal{G}^\star_B$ in Fig.~\ref{fig:graph_3SAT_modified1}.}
        \label{fig:graph_3SAT_modified2}
    \end{figure*}
We now need to show that finding a path satisfying Rules~1 and 2 is equivalent to finding a simple path in $\mathcal{G}_B^\bullet$ with HD approximate capacity greater than or equal to $\mathsf{Z}$.
It is not difficult to see that a path that follows Rules 1 and 2 is simple and has an HD approximate capacity greater than or equal to $\mathsf{Z}$ (by avoiding subpaths $a_{ij,k\ell} - f_{ij,k\ell} - b_{k\ell,ij}$). To prove the equivalence, we now need to show that a simple path with capacity greater than or equal to $\mathsf{Z}$ satisfies Rules 1 and 2. Towards this end, note that Rule 1 is inherently satisfied since the path is simple (i.e., it visits any vertex at most once). 
For Rule 2, we next argue that both subpaths are avoided by contradiction. 

Assume that the simple path selected contains a subpath of the form $a_{ij,k\ell} - f_{ij,k\ell} - b_{k\ell,ij}$. By our construction of the edge capacities, both the edges $e_{a_{ij,k\ell},f_{ij,k\ell}}$ and $e_{f_{ij,k\ell},b_{k\ell,ij}}$
have a capacity equal to $1.5\mathsf{Z}$.
This gives us a contradiction since half of the harmonic mean between the capacities of these two consecutive edges is equal to $0.75\mathsf{Z}$.
Since the HD approximate capacity of a path is the minimum of half of the harmonic means of its consecutive edges, then the selected path cannot have an HD approximate capacity greater than or equal to $\mathsf{Z}$, which leads to a contraction.
Thus, a subpath $a_{ij,k\ell} - f_{ij,k\ell} - b_{k\ell,ij}$ is always avoided.
We now need to prove that also the path $a_{k\ell,ij} - f_{ij,k\ell} - b_{ij,k\ell}$ is always avoided.
Towards this end,
assume that the simple path selected with HD approximate capacity greater than or equal to $\mathsf{Z}$ contains (for some $i',j',k'$ and $\ell'$) a subpath of the form $a_{k'\ell',i'j'} - f_{i'j',k'\ell'} - b_{i'j',k'\ell'}$. 
Note that, as per our construction in the graph $\mathcal{G}_B^\bullet$, we have that $i' < k'$.
Let $i^\star$ be the smallest index $i'$ for which such a subpath exists in our selected path. 
Since for the subpath in question we have that $i^\star < k'$, then to reach $a_{k'\ell',i^\star j'}$ from $S$, we have already visited $r_{i^\star}$ earlier in the path. However, to move from $b_{i^\star j',k'\ell'}$ to $D$ (after the subpath in question), we need to pass through $r_{i^\star}$ once more. 
Clearly, since the path is simple, this leads to a contradiction.
Thus, a subpath $a_{k\ell,ij} - f_{ij,k\ell} - b_{ij,k\ell}$ is also always avoided.
This completes the proof that a simple path with capacity greater than or equal to $\mathsf{Z}$ satisfies Rule~2. 
Therefore,
finding a path satisfying Rules 1 and 2 is equivalent to finding a simple path in $\mathcal{G}_B^\bullet$ with HD approximate capacity greater than or equal to $\mathsf{Z}$. The second statement is an instance of the HD-Path problem in Definition~\ref{def:decPr}.

Note that in each of the four graph constructions described earlier, we construct one graph from the other using a polynomial number of operations. 
Thus, this proves by construction that there exists a polynomial reduction from the 3SAT problem to the HD-Path problem.
This concludes the proof of Lemma~\ref{lemm:lemmMain}.

\smallskip
\noindent {\bf Example.} For our running example the assignment of the edge capacities is shown in Fig.~\ref{fig:graph_3SAT_modified2}, where {\it black} and {\it blue} edges have a capacity of $3\mathsf{Z}$ and {\it red} edges have a capacity of $1.5 \mathsf{Z}$.
\end{proof}

\section{Some instances with polynomial-time solutions}
\label{sec:poly_agorithm}
In this section, we discuss a special class of networks for which there exists a polynomial-time algorithm to find a simple path with the largest HD approximate capacity.
In particular, we focus on networks where the number of cycles is polynomial, i.e., the number of cycles is at most $N^\alpha$ for some constant $\alpha > 0$, where $N+2$ is the total number of nodes in the network.
Our approach is based on relating paths in a network (described by the digraph $\mcal{G}$) to paths in the line digraph of $\mcal{G}$ denoted as $\mcal{L}_\mcal{G}$.
We describe the relation in the following {subsection} and then present an algorithm that finds the best HD simple path in 
polynomial-time
for the aforementioned class of networks.

\subsection{The line digraph perspective to the best HD path problem}\label{sec:chordal}
A line digraph of a digraph $\mcal{G}$ is defined as follows.
\begin{defin}[Line digraph $\mcal{L}_\mcal{G}$]\label{defin:line_graph}
    For a given digraph $\mcal{G}$, its line digraph $\mcal{L}_\mcal{G}$ is a digraph defined by the set of vertices $\mcal{V}(\mcal{L}_{\mcal{G}})$ and the set of directed edges $\mcal{E}(\mcal{L}_\mcal{G})$. 
    The set $\mcal{V}(\mcal{L}_\mcal{G})$ is defined as $\mcal{V}(\mcal{L}_\mcal{G}) = \{v_{ij} | e_{i,j} \in \mcal{E}(\mcal{G})\}$ where $e_{i,j}$ is the directed edge from vertex $v_i$ to vertex $v_j$.
    The set of edges $\mcal{E}(\mcal{L}_\mcal{G})$ is defined as $\mcal{E}(\mcal{L}_\mcal{G}) = \{e_{ij,k\ell} |k = j,\ v_{ij}, v_{k\ell} \in \mcal{V}(\mcal{L}_\mcal{G}) \}$.
\end{defin}
An illustration of a digraph and its associated line digraph is shown in Fig.~\ref{fig:example_line_graph}.
We can make the two following observations on how simple HD paths are represented in the line digraph.

\medskip

\noindent {\bf 1) HD paths in $\mcal{G}$ are equivalent to FD paths in $\mcal{L}_\mcal{G}$.}
Note that a path
$\mcal{P}$ in a network $\mcal{G}$ 
can be equivalently defined as the sequence of its adjacent edges (instead of vertices),
i.e., we can equivalently write the path 
$\mcal{P} = v_{k_1} - v_{k_2} - \dots - v_{k_m}$ in 
$\mcal{G}$ as $\mcal{P} = e_{k_1,k_2} - e_{k_2,k_3} - \dots - e_{k_{m-1},k_m}$.
Given this and from the definition of the line digraph $\mcal{L}_\mcal{G}$, the path $\mcal{P}$ in $\mcal{G}$ is equivalent to the path $\mcal{P}_\mcal{L} = v_{k_1k_2} - v_{k_2k_3} \dots - v_{k_{m-1}k_m}$ in $\mcal{L}_\mcal{G}$. 
For each edge $e_{ij,jk} \in \mcal{E}(\mcal{L}_\mcal{G})$, we define the capacity for the edge $e_{ij,jk}$ as
\begin{align}
    c_\mcal{L}(e_{ij,jk}) = \frac{c(e_{i,j})c(e_{j,k})}{c(e_{i,j})+ c(e_{j,k})}, 
\label{eq:cap_L_G}
\end{align}
where $c(e_{i,j})$ is the point-to-point link capacity of the edge (link) $e_{i,j}$ in $\mcal{G}$. 
Thus, we have that the FD capacity of the path $\mcal{P}_\mcal{L}$ in $\mcal{L}_\mcal{G}$ is given by
\begin{align}\label{eq:path_HD_path_FD_line}
    \msf{C}^{\rm FD}_{\mcal{P}_\mcal{L}} &= \min_{e_{ij,jk} \in \mcal{E}(\mcal{P}_\mcal{L})} \left\{c_\mcal{L}(e_{ij,jk}) \right\} \nonumber \\
    &=\min_{e_{ij,jk} \in \mcal{E}(\mcal{P}_\mcal{L})} \left\{ \frac{c(e_{i,j})\ c(e_{j,k})}{c(e_{i,j})+ c(e_{j,k})}\right\} = \msf{C}_{\mcal{P}},
\end{align}
where $\msf{C}_{\mcal{P}}$ is defined in~\eqref{eq:HD_capacity}.
From~\eqref{eq:path_HD_path_FD_line} and our previous discussion, we can conclude that, to find the path with the largest HD approximate capacity in the network described by the digraph $\mcal{G}$, we can first find the path in $\mcal{L}_\mcal{G}$ that has the largest FD capacity (where the link capacities in $\mcal{L}_\mcal{G}$ are defined as in~\eqref{eq:cap_L_G}) and then map this path in $\mcal{L}_\mcal{G}$ into its equivalent in $\mcal{G}$.
   \begin{figure}[t]
        \centering
        \includegraphics[width=\columnwidth]{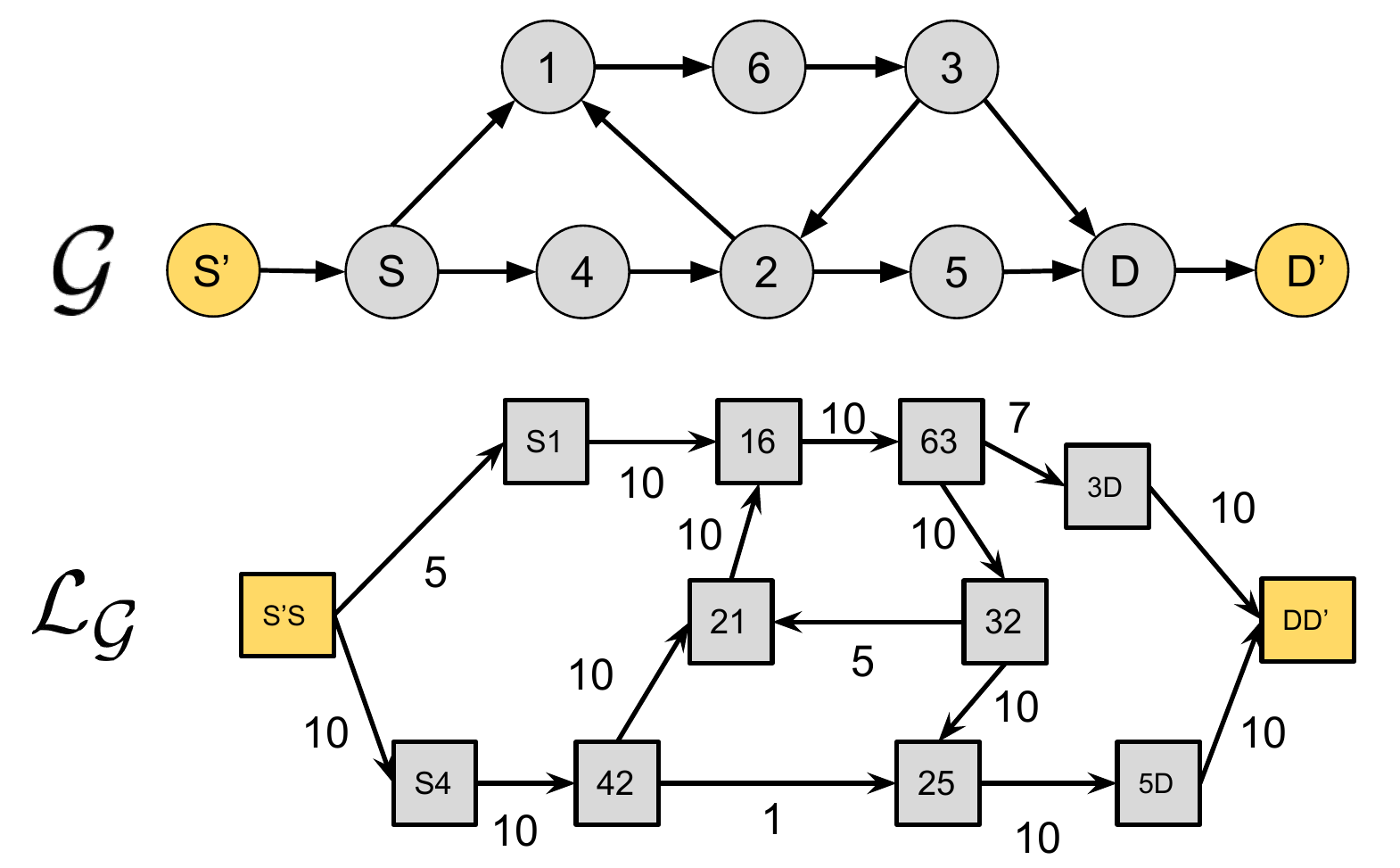}
        \caption{An example of a digraph $\mcal{G}$ with its corresponding line digraph $\mcal{L}_\mcal{G}$. For ease of notation, indexes $ij$ instead of $v_{ij}$ are used and the edge capacities are only shown on $\mcal{L}_\mcal{G}$.}
        \label{fig:example_line_graph}
    \end{figure}

\smallskip

\noindent {\bf 2) Simple paths in $\mcal{G}$ are equivalent to simple chordless paths in $\mcal{L}_\mcal{G}$.} 
We start by defining \emph{chordal} and \emph{chordless} paths in digraphs.

\begin{defin}[Chordal and chordless paths]
    A path in the digraph $\mathcal{G}^\prime$ is \emph{chordal} if there exists {an} edge $e \in \mcal{E}(\mcal{G}')$ such that its endpoints are two non-consecutive vertices in the path.
A path that is not chordal is called \emph{chordless}.
\end{defin}

For example, with reference to Fig.~\ref{fig:example_line_graph}, the path $S' - S - v_4 - v_2 - v_1  - v_6 - v_3 - D - D'$ is a chordal path in $\mcal{G}$ since $e_{3,2} \in \mcal{E}(\mcal{G})$ and the vertices $v_3$ and $v_2$ belong to the path but are non-consecutive. Thus, $e_{3,2}$ is a chord {for} this path in $\mcal{G}$.
A similar reasoning holds for $e_{S,1}$.

Consider a cyclic path $\mcal{P}_{\rm cycle}$ in $\mcal{G}$.
This implies that some vertex $v_{k} \in \mcal{P}_{\rm cycle}$ appears at least twice in the path. Denote with $v_{q_1}$ the node following $v_k$ in its first appearance in $\mcal{P}_{\rm cycle}$ and with $v_{q_2}$ the node preceding $v_k$ in its second appearance in the path $\mcal{P}_{\rm cycle}$. Then, if we write the line digraph equivalence of $\mcal{P}_{\rm cycle}$, we have
\[
    \mcal{P}_{\mcal{L}_{\rm cycle}} = \dots  - v_{k q_1} - \dots - v_{q_2 k} - \dots\ .
\]
From the construction of $\mcal{E}(\mcal{L}_\mcal{G})$ in Definition~\ref{defin:line_graph}, we see that the edge $e_{q_2k,k q_1} \in \mcal{E}(\mcal{L}_\mcal{G})$, which implies that $\mcal{P}_{\mcal{L}_{\rm cycle}}$ is chordal.

Differently, for a simple path $\mcal{P}_{\rm simple}$, any vertex $v_k \in \mcal{P}_{\rm simple}$ appears only once. Thus, in the line digraph equivalent path $\mathcal{P}_{\mcal{L}_{\rm simple}}$, the index $k$ appears only in two consecutive vertices, which implies that $\mcal{P}_{\mcal{L}_{\rm simple}}$ is chordless.
This shows the equivalence described in our observation between simple paths in $\mcal{G}$ and simple chordless paths in $\mcal{L}_\mcal{G}$.

Given the two observations above, we can now equivalently describe our HD routing problem on the line digraph as follows: \emph{Can we find the chordless simple path in $\mcal{L}_\mcal{G}$ that has the largest FD capacity?}

\subsection{An algorithm on the line digraph $\mcal{L}_\mcal{G}$}
The goal of the algorithm described in this section is to find the chordless simple path in $\mcal{L}_\mcal{G}$ that has the largest FD capacity.
The algorithm described here is a modification of the result proposed in~\cite{ahmed2009shortest} for selecting shortest paths while avoiding forbidden subpaths.
To start, we first modify our given network (described by $\mcal{G}$) so that the source $S$ and the destination $D$ have at most degree one.
In particular, we modify the digraph $\mcal{G}$ by adding two new nodes (namely, $S'$ and $D'$) that are connected only to $S$ and $D$ with edges $e_{S',S}$ and $e_{D,D'}$ (similar to Fig.~\ref{fig:example_line_graph}). 
These two added edges have point-to-point capacities equal to $X\to\infty$. 
Denote this new digraph by $\mcal{G}'$ and create the line digraph associated with $\mcal{G}'$ and denote it by $\mcal{L}_\mcal{G}^{(0)}$. In $\mcal{L}_\mcal{G}^{(0)}$, we now consider the node $v_{S'S}$ as our source and the node $v_{DD'}$ as our intended destination.


The algorithm is based on incrementally applying Dijkstra's algorithm~\cite{dijkstra1959note}.
We first try to find the best FD path from $v_{S'S}$ to $v_{DD'}$ in $\mcal{L}_\mcal{G}^{(i)}$ by running Dijkstra's algorithm. 
Note that Dijkstra's algorithm returns a spanning tree rooted at $v_{S'S}$ that describes the best FD path from $v_{S'S}$ to each vertex $v'$ in $\mcal{L}_\mcal{G}^{(i)}$. 
We denote the tree from our initial run as $\mathcal{T}_0$.
From this point, the algorithm iterates (until termination) over four main steps described as follows (starting with $i=0$).

   \begin{figure}[t]
        \centering
        \includegraphics[width=0.7\columnwidth]{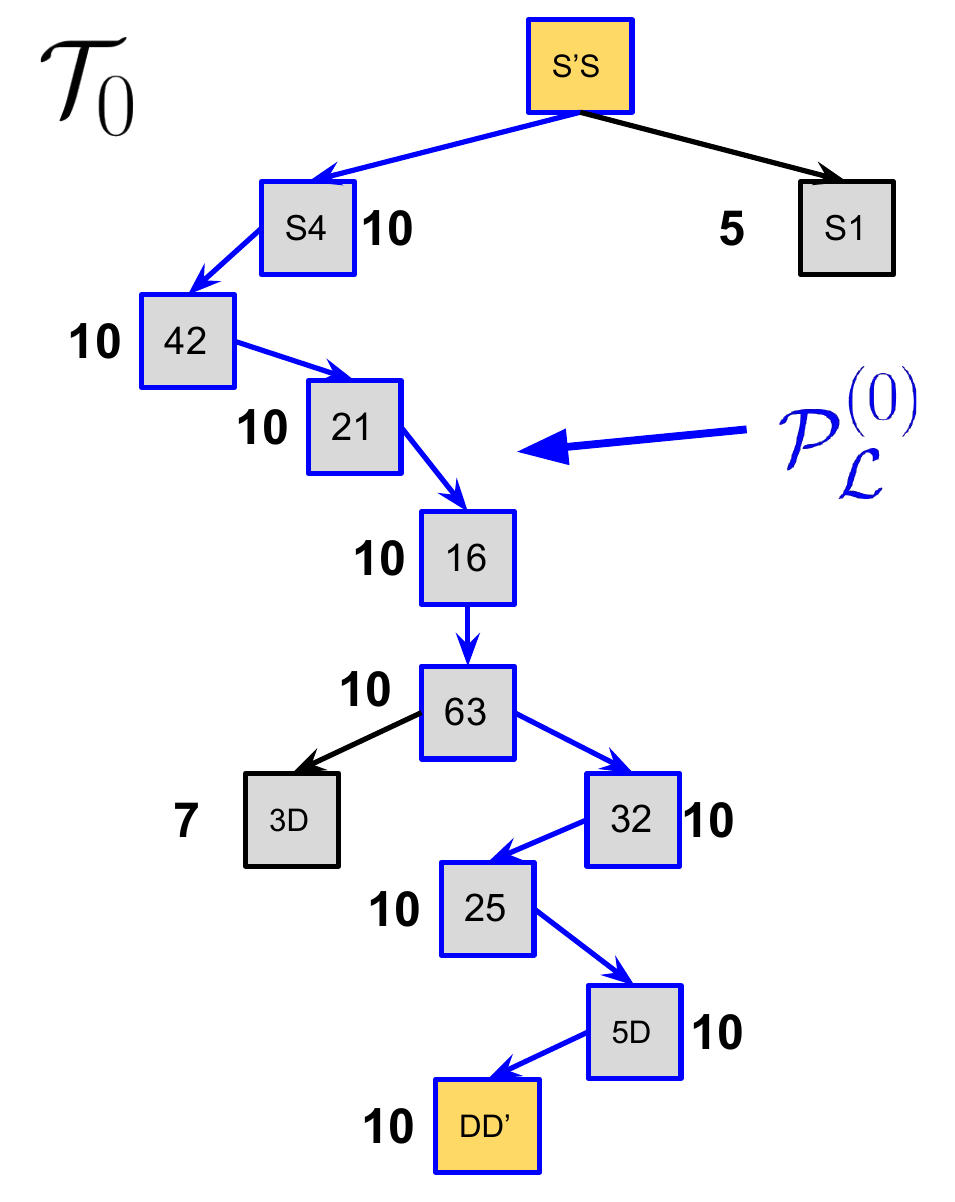}
        \caption{Spanning tree $\mcal{T}_0$ for $\mcal{L}_\mcal{G}^{(0)} = \mcal{L}_\mcal{G}$ in Fig.~\ref{fig:example_line_graph} (indexes $ij$ instead of $v_{ij}$ are used for ease of notation). 
        Boldface numbers represent the FD capacity with which a node can be reached from $SS'$ using the tree $\mcal{T}_0$. The highlighted path is the route selected from this tree $\mcal{T}_0$ from $S'S$ to $DD'$.}
        \label{fig:tree_example}
    \end{figure}

\smallskip

\noindent {\bf Step 1.} Given the line digraph $\mcal{L}^{(i)}_\mcal{G}$ and an existing best FD path spanning tree $\mcal{T}_i$, check whether the path $\mcal{P}^{(i)}_\mcal{L}$ from $v_{S'S}$ to $v_{DD'}$ defined by $\mcal{T}_i$ is chordless.
If it is chordless, terminate the algorithm since we have found the chordless path from $v_{S'S}$ to $v_{DD'}$ with the largest FD capacity. 
Otherwise, if it is not chordless, then proceed to Step 2.

\noindent{\bf Example.} We use the line digraph from Fig.~\ref{fig:example_line_graph} as our $\mcal{L}_\mcal{G}^{(0)}$. 
Then, for $i = 0$, we have the spanning tree $\mathcal{T}_0$ (from Dijkstra's algorithm) and the selected path $\mcal{P}_\mcal{L}^{(0)}$ as shown in Fig.~\ref{fig:tree_example}.
The path $\mcal{P}_\mcal{L}^{(0)}$ is chordal since 
$e_{32,21} \in \mcal{L}_\mcal{G}^{(0)}$ and $e_{42,25} \in \mcal{L}_\mcal{G}^{(0)}$.

\smallskip

\noindent {\bf Step 2.} Let $\mcal{C}^{(i)}_{\mcal{P}}$ be the set of edges in $\mcal{L}_\mcal{G}^{(i)}$ that are chords {for} the path $\mcal{P}_\mcal{L}^{(i)}$ from $v_{S'S}$ to $v_{DD'}$ discussed in the earlier step.
Let $\mcal{C}^{(i)}_{\mcal{P},\rm first} \in \mcal{C}^{(i)}_{\mcal{P}}$ be the first chord that appears along the path $\mcal{P}_\mcal{L}^{(i)}$. 
We denote the endpoints of $\mcal{C}^{(i)}_{\mcal{P},\rm first}$ as $v_{k_1k_2}$ and $v_{k_{m}k_{m+1}}$, where $v_{k_1k_2}$ is the vertex that among the two appears earlier in the path $\mcal{P}_\mcal{L}^{(i)}$ and where $m$ is the length of the subpath $\mcal{P}^{(i)}_{\rm to-fix}$ of $\mcal{P}_\mcal{L}^{(i)}$ connecting the two endpoints, i.e., we now have a path $\mcal{P}^{(i)}_{\rm to-fix} = v_{k_1k_2} - v_{k_2k_3} - \dots - v_{k_mk_{m+1}}$.
Notice that, with this, we have {$k_{m+1}=k_1$}.

\noindent{\bf Example.} For our running example and $i=0$, we can see from Fig.~\ref{fig:example_line_graph} and Fig.~\ref{fig:tree_example} that the set of chords for $\mcal{P}^{(0)}_\mcal{L}$ is $\mcal{C}^{(0)}_{\mcal{P}} = \{e_{32,21},e_{42,25} \}$. 
The selected chord $\mcal{C}^{(0)}_{\mcal{P},\rm first}$ is $e_{32,21}$ because its effect on the path concludes earlier than $e_{42,25}$.
Hence, we have
$\mcal{P}^{(0)}_{\rm to-fix} = v_{21} - v_{16} - v_{63} - v_{32}$, which is of length $m=4$.

\smallskip

\noindent {\bf Step 3.}
We now introduce new vertices to the graph $\mcal{L}_\mcal{G}^{(i)}$ by replicating every intermediate vertex in $\mcal{P}_{\rm to-fix}^{(i)}$. 
In particular, we introduce a replica vertex $v_{k'_ik'_{i+1}}$ for $v_{k_ik_{i+1}}$ where $i \in [2:m{-}1]$.
We connect these replicas of vertices to each other in the same way their corresponding originals are connected in $\mcal{P}_{\rm to-fix}^{(i)}$,
i.e., we include the edge $e_{k'_ik'_{i+1},k'_{i+1}k'_{i+2}} \forall i \in [2:m{-}1]$ {with the same edge capacity as $e_{k_ik_{i+1},k_{i+1}k_{i+2}}$}.

Then,
for every $v_{i'j'} \in \mcal{V}({\mcal{L}^{(i)}_\mcal{G}})\backslash\mcal{V}({\mcal{P}^{(i)}_{\rm to-fix}})$ such that $e_{i'j',k_ik_{i+1}} \in \mcal{E}(\mcal{L}^{(i)}_\mcal{G})$, we add an edge that connects $v_{i'j'}$ to the replica vertex of $v_{k_ik_{i+1}}$, i.e., we 
add the edge $e_{i' j',k'_ik'_{i+1}}$ {(with the same edge capacity as $e_{i' j',k_ik_{i+1}}$)}.
In other words, every vertex in $\mcal{L}_\mcal{G}^{(i)}$ that is not in $\mcal{P}_{\rm to-fix}^{(i)}$ and has an edge incident on an intermediate vertex $v_{k_ik_{i+1}}, i \in [2:m-1],$ of $\mcal{P}_{\rm to-fix}^{(i)}$ now has a similar (replicated) edge incident on the replica $v_{k'_ik'_{i+1}}$ of $v_{k_ik_{i+1}}$.
    Note that at this point: (i) the original vertices in $\mcal{P}^{(i)}_{\rm to-fix}$ still form a chordal path in $\mcal{L}_\mcal{G}^{(i)}$ and (ii) the replica vertices have every possible incident connection their original vertices had except connections to the two endpoint vertices of $\mcal{P}^{(i)}_{\rm to-fix}$.
    We denote the digraph at this point as $\widehat{\mcal{L}}_\mcal{G}^{(i+1)}$.

Now, our last change is to modify how the two endpoints of the path $\mcal{P}^{(i)}_{\rm to-fix}$ in $\widehat{\mcal{L}}^{(i+1)}_{\mcal{G}}$ connect to the intermediate vertices of the path and their replicas. 
We do this by adding the edge $e_{k'_{m-1}k'_{m},k_mk_{m+1}}$ that connects the last replicated vertex $v_{k'_{m-1}k'_m}$ to the endpoint $v_{k_mk_{m+1}}$of $\mcal{P}^{(i)}_{\rm to-fix}$ and by removing the edge $e_{k_{m-1}k_{m},k_mk_{m+1}}$ that connected the original last intermediate vertex to the endpoint. 
In particular, the new edge $e_{k'_{m-1}k'_{m},k_mk_{m+1}}$ has the same capacity as $e_{k_{m-1}k_{m},k_mk_{m+1}}$ that was removed.
Denote this new digraph as $\mcal{L}_\mcal{G}^{(i+1)}$.
Note that in this new digraph $\mcal{L}_\mcal{G}^{(i+1)}$, the path $\mcal{P}^{(i)}_{\rm to-fix}$ does not exist anymore,
while all the other chordless paths have stayed the same.
Thus, we have successfully eliminated a cycle (chordal path) that appeared in the digraph before by replicating vertices and deleting edges.

%

\noindent{\bf Example.} For our running example and $i=0$, recall that $\mcal{P}^{(0)}_{\rm to-fix} = v_{21} - v_{16} - v_{63} - v_{32}$. The new generated digraphs $\widehat{\mcal{L}}_\mcal{G}^{(1)}$ and $\mcal{L}_\mcal{G}^{(1)}$ are shown in Fig.~\ref{fig:line_graph_modified_step3}.

   \begin{figure}[t]
        \centering
        \includegraphics[width=0.48\textwidth]{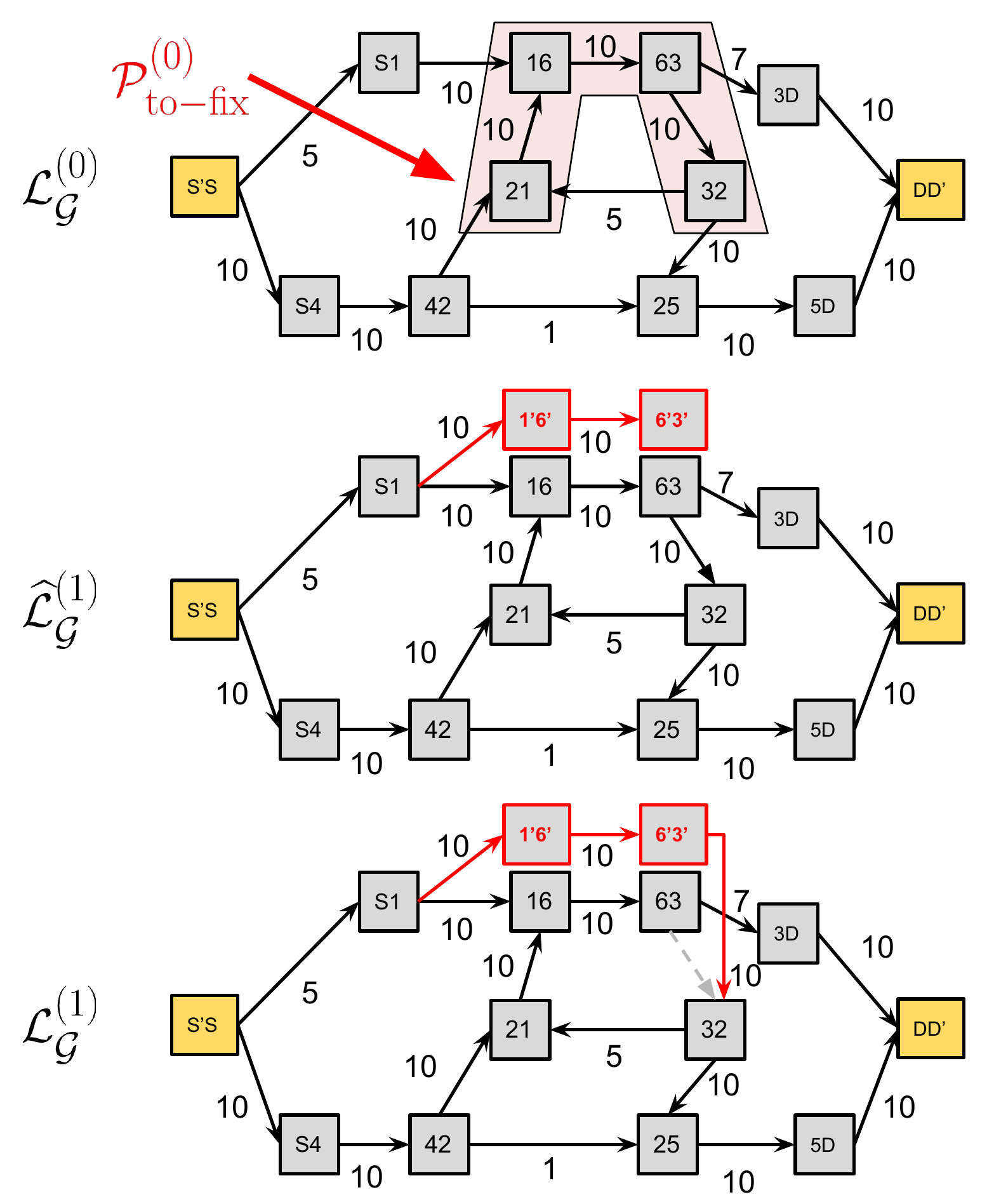}
        \caption{ $\mcal{L}_\mcal{G}^{(0)}$ from Fig.~\ref{fig:example_line_graph} and the corresponding $\widehat{\mcal{L}}_\mcal{G}^{(1)}$ and $\mcal{L}_\mcal{G}^{(1)}$. The replica vertices and the added edges are shown in red while the deleted edges are dashed.}
        \label{fig:line_graph_modified_step3}
    \end{figure}

\smallskip

\noindent {\bf Step 4.} In the fourth step, our goal is to create the spanning tree $\mcal{T}_{i+1}$ of the best FD paths associated with the digraph $\mcal{L}^{(i+1)}_\mcal{G}$.
To ensure termination of the algorithm, a condition for this construction is that $\mcal{T}_{i+1}$ should be made as similar as possible to $\mcal{T}_i$~\cite{ahmed2009shortest}. 
To do so, we 
run Dijkstra's algorithm to find the new spanning tree $\mathcal{T}_{i+1}$ but we 
start at an intermediate stage in the algorithm, since we already know part of the spanning tree from $\mcal{T}_{i}$. 
In particular, we do the following procedure. Recall our definition of $\mcal{P}^{(i)}_{\rm to-fix}$ and its endpoint $v_{k_mk_{m+1}}$ in Step 2. 
Define $\mcal{V}_{\rm redo}^{(i+1)}$ to be the set of vertices for which we need to find a new best FD path. 
In particular, we define $\mcal{V}_{\rm redo}^{(i+1)}$ as the union of: (i) the set of all replica vertices introduced in $\mcal{L}_\mcal{G}^{(i+1)}$, (ii) the set of descendant vertices of $v_{k_mk_{m+1}}$ in $\mcal{T}_i$, and (iii) the vertex $v_{k_mk_{m+1}}$. 
For any vertex $v \not\in \mcal{V}_{\rm redo}^{(i+1)}$, the path connecting $v_{SS'}$ to $v$ in $\mcal{T}_i$ does not pass through $\mcal{P}^{(i)}_{\rm to-fix}$. As a result, we can copy this part of $\mcal{T}_i$ to $\mcal{T}_{i+1}$ without loss of generality. Clearly, replica vertices never existed before $\mcal{L}_\mcal{G}^{(i+1)}$ so there is no known path for them in $\mcal{T}_i$. Similarly, the path from $v_{S'S}$ to $v_{k_mk_{m+1}}$  (and its descendants) passes through $\mcal{P}^{(i)}_{\rm to-fix}$ , thus we need to find a new route for them now that the chordal path has been eliminated from the graph.
Also it is not difficult to see that any $v' \not\in \mcal{V}^{(i+1)}_{\rm redo}$ will not be a descendant of $v$, $\forall v \in \mcal{V}^{(i+1)}_{\rm redo}$ as this would contradict the need to find a new path for some vertex in $\mcal{V}^{(i+1)}_{\rm redo}$.

   \begin{figure}[t]
        \centering
        \includegraphics[width=0.28\textwidth]{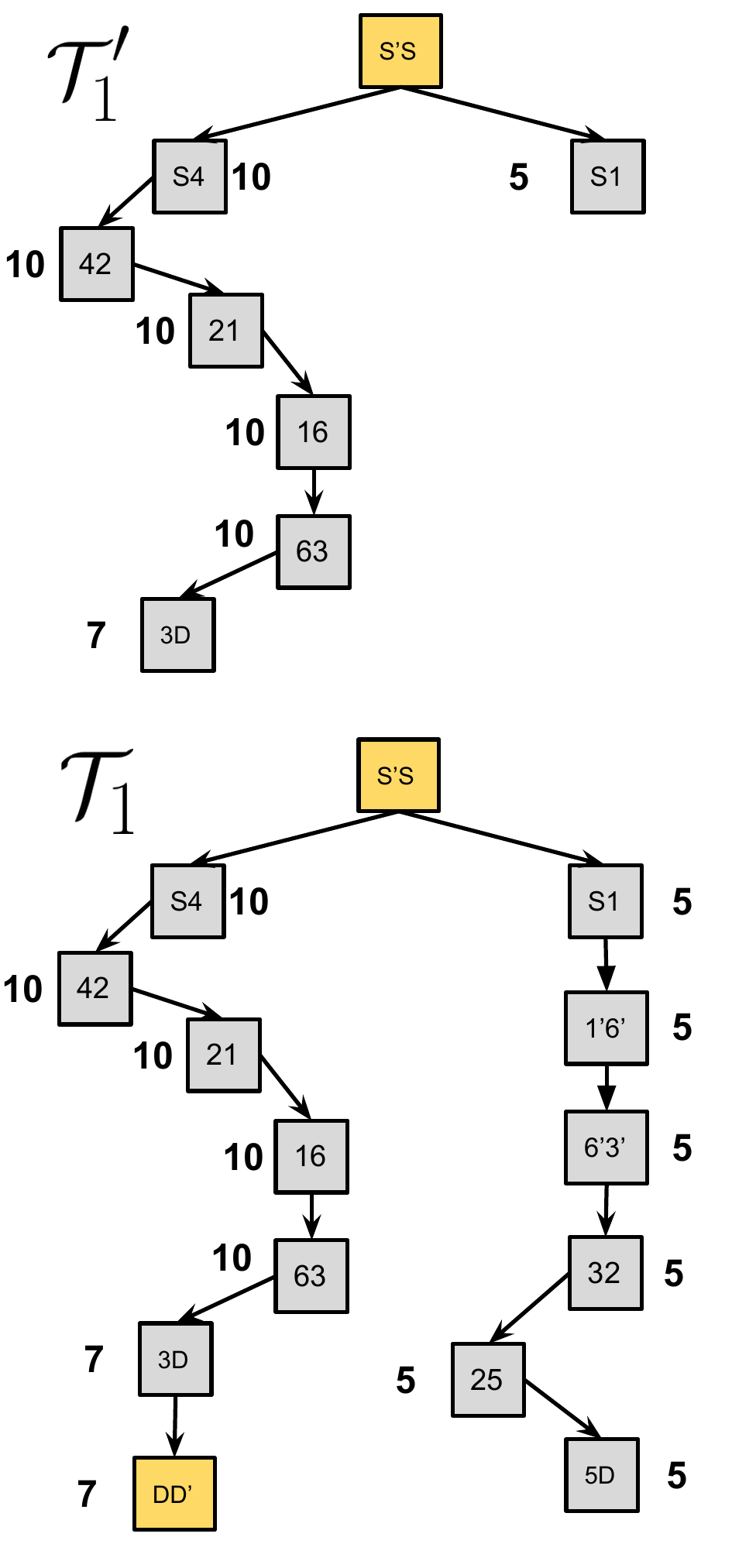}
        \caption{Spanning trees $\mcal{T}'_1$ and  $\mcal{T}_1$ for $\mcal{L}_\mcal{G}^{(1)}$ in Fig.~\ref{fig:line_graph_modified_step3}.
        Boldface numbers represent the FD capacity with which a node can be reached from $SS'$ using each tree.}
        \label{fig:line_graph_new_tree}
    \end{figure}

As per our discussion above, we find the rest of $\mcal{T}_{i+1}$ by initializing an intermediate point in the Dijkstra's algorithm and continue the execution of the algorithm from there. 
In particular, we start from the point where $\forall v \not\in \mcal{V}^{(i+1)}_{\rm redo}$ have been expanded (and thus appear in $\mcal{T}_{i+1}$ with the same path as in $\mcal{T}_i$).
We denote the intermediate version of $\mcal{T}_{i+1}$ at this point as $\mcal{T}'_{i+1}$, which is a pruned version of the tree $\mcal{T}_i$.
Note that, at any iteration of the classical Dijkstra's algorithm, a yet to be expanded vertex $v$ has a best so-far path from $v_{S'S}$ of FD capacity $c'(v)$. 
This achievable FD capacity {at an unexpanded vertex $v$} is based on {the maximum capacity achieved by each of the} neighbor vertices that have already been expanded and added to the spanning tree $\mcal{T}'_{i+1}$ {as well as the capacities of incident edges from those neighbor vertices to the vertex $v$.}
We denote the capacity of a neighbor vertex $v'$ that was already expanded as $\hat{c}^{\mcal{T}_{i+1}}_\mcal{L} (v')$.
We now note that the point from which we are going to start Dijkstra's algorithm is when the set of unexpanded vertices is $\mcal{V}_{\rm redo}^{(i+1)}$ and the vertices in $\mcal{T}'_{i+1}$ form the complement set ${\mcal{V}_{\rm redo}^{(i+1)}}^c$.
Thus, for the vertices still unexpanded (i.e., in $\mcal{V}^{(i+1)}_{\rm redo}$), the capacities currently achievable at them at this stage of the algorithm are initialized as
\[
    \widehat{c}_\mcal{L}(v) = \max_{v' \not\in \mcal{V}^{(i+1)}_{\rm redo}} \min\left \{ c^{\mcal{T}_i}_\mcal{L}(v),\ c_\mcal{L}(e_{v',v})\right \}.
\]
%
Now that we have the initialization of Dijkstra's algorithm to the state that we want, we run the standard routine of the algorithm to continue expanding the vertices in $\mcal{V}^{(i+1)}_{\rm redo}$.
When all the vertices have been expanded, we get the final tree $\mcal{T}_{i+1}$. 

\noindent{\bf Example.} For our running example and $i=0$, the tree $\mcal{T}'_{1}$ (which is a subset of $\mcal{T}_{0}$) and the new generated tree $\mcal{T}_1$ for $\mcal{L}_\mcal{G}^{(1)}$ are shown in Fig.~\ref{fig:line_graph_new_tree}.
It is worth noting that the spanning tree $\mathcal{T}_1$ in Fig.~\ref{fig:line_graph_new_tree} has the path $\mcal{P}^{(1)}_\mcal{L} = v_{SS'} - v_{S4} - v_{42} - v_{21} - v_{16} - v_{63} - v_{3D} - v_{DD'}$ of capacity $\msf{C}^{\rm FD}_{\mcal{P}^{(1)}_\mcal{L}} = 7$ that is chordless. Hence the algorithm returns this path and terminates (see Step~1).

%

\medskip

It is important to note that, from the replication procedure we do in Step 3, we add a number of replica vertices equal to the length of $\mcal{P}^{(i)}_{\rm to-fix}$ minus two (since we do not replicate the endpoints). Moreover, in addition to the replica vertices, only one endpoint of $\mcal{P}^{(i)}_{\rm to-fix}$ is a member of $\mcal{V}^{(i+1)}_{\rm redo}$ (i.e., the vertex $v_{k_mk_{m+1}}$).
As a result $\left | \mcal{V}^{(i)}_{\rm redo} \right | \leq \left|\mcal{V}({\mcal{L}_\mcal{G}})\right|, \forall i$.
{Thus, }the size of the network that Dijkstra's algorithm processes in Step 4 does not increase from one iteration to the next. 
This implies that Step 4 of the algorithm has a complexity that is at most $O(V_{\mcal{L}_\mcal{G}}\log V_{\mcal{L}_\mcal{G}} + E_{\mcal{L}_\mcal{G}})$ where $V_{\mcal{L}_\mcal{G}} = \left|\mcal{V}({\mcal{L}_\mcal{G}})\right|$ and $E_{\mcal{L}_\mcal{G}} = \left|\mcal{E}(\mcal{L}_\mcal{G})\right|$.
The time complexity of Steps 1, 2 and 3 is linear in $V_{\mcal{L}_\mcal{G}}$ and $E_{\mcal{L}_\mcal{G}}$.
Let ${K}_{\mcal{G}}$ be the number of cycles in $\mcal{G}$. {From the observation in Section~\ref{sec:chordal}, this} is equal to the number of chordal paths in $\mcal{L}_\mcal{G}$.
Since in each iteration over the four steps, we eliminate one chordal path, then for a line graph with $K_\mcal{G}$ chordal paths, we make at most $K_\mcal{G}$ iterations.
As a result, the complexity of the described algorithm for finding the simple chordless path with the largest FD capacity in $\mcal{L}_\mcal{G}$ is $O\left((K_\mcal{G}+1) (V_{\mcal{L}_\mcal{G}}\log V_{\mcal{L}_\mcal{G}} + E_{\mcal{L}_\mcal{G}})\right)$.

Note that the number of vertices in $\mcal{L}_\mcal{G}$ is equal to the number of edges in $\mcal{G}$ and the number of edges in $\mcal{L}_\mcal{G}$ is upper bounded by the number of edges in $\mcal{G}$ multiplied by the maximum vertex degree $d$. Additionally, the complexity of constructing a line digraph $\mcal{L}_\mcal{G}$ from a digraph $\mcal{G}$ is of order $O(|\mcal{E}({\mcal{G}})| d)$.
Thus, the problem of finding the simple path in $\mcal{G}$ with the largest HD approximate capacity is equivalent to creating the line digraph $\mcal{L}_\mcal{G}$ with FD capacities and then finding the chordless path with the largest FD capacity in that line digraph $\mcal{L}_\mcal{G}$.
The computational complexity of this procedure is 
$O\left(|\mcal{E}({\mcal{G}})| d + (K_\mcal{G}+1) (V_{\mcal{L}_\mcal{G}}\log V_{\mcal{L}_\mcal{G}} + E_{\mcal{L}_\mcal{G}})\right)$ = 
$O\left( (K_\mcal{G}+1) ( |\mcal{E}({\mcal{G}})| \log |\mcal{E}({\mcal{G}})|  + |\mcal{E}({\mcal{G}})| d)\right)$.

\begin{cor}\label{cor:complexity}
    {\rm If the number of cycles in a network with $N$ relays (described by the digraph $\mcal{G}$) is at most polynomial $O(N^\alpha)$, then we can find the simple path with the largest HD approximate capacity in polynomial-time, i.e., in $O((N^\alpha+1) ( |\mcal{E}({\mcal{G}})| \log |\mcal{E}({\mcal{G}})|  + |\mcal{E}({\mcal{G}})| d))$. This holds even when we do not have an a priori knowledge of the location of the cycles.}
\end{cor}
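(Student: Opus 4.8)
The plan is to obtain this corollary as a direct specialization of the running-time analysis already assembled for the line-digraph algorithm immediately above its statement, so the argument is one of assembly rather than new construction. First I would recall the two structural equivalences established in Section~\ref{sec:chordal}: the best HD simple path in $\mcal{G}$ corresponds exactly to the chordless simple path of largest FD capacity in the line digraph $\mcal{L}_\mcal{G}$ (with the edge capacities defined in~\eqref{eq:cap_L_G}), and cycles in $\mcal{G}$ are in bijection with chordal paths in $\mcal{L}_\mcal{G}$, so that $K_\mcal{G}$ counts both. This reduces the claim to bounding the cost of the four-step algorithm that returns the chordless FD-optimal path in $\mcal{L}_\mcal{G}$.

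Next I would import the overall complexity bound $O((K_\mcal{G}+1) ( |\mcal{E}({\mcal{G}})| \log |\mcal{E}({\mcal{G}})|  + |\mcal{E}({\mcal{G}})| d))$ derived above, which already folds in the $O(|\mcal{E}({\mcal{G}})| d)$ cost of constructing $\mcal{L}_\mcal{G}$ together with the per-iteration Dijkstra cost $O(V_{\mcal{L}_\mcal{G}}\log V_{\mcal{L}_\mcal{G}} + E_{\mcal{L}_\mcal{G}})$ under the identities $V_{\mcal{L}_\mcal{G}} = |\mcal{E}({\mcal{G}})|$ and $E_{\mcal{L}_\mcal{G}} \le |\mcal{E}({\mcal{G}})| d$. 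Substituting the hypothesis $K_\mcal{G} = O(N^\alpha)$ then gives the stated bound $O((N^\alpha+1) ( |\mcal{E}({\mcal{G}})| \log |\mcal{E}({\mcal{G}})|  + |\mcal{E}({\mcal{G}})| d))$. To conclude polynomiality I would note that in a digraph on $N+2$ vertices we have $|\mcal{E}({\mcal{G}})| = O(N^2)$ and $d = O(N)$, so each factor is polynomial in $N$ and, for constant $\alpha$, so is their product. I would also address the clause about the absence of a priori knowledge of the cycles: the algorithm never enumerates or localizes cycles beforehand, since Step~1 only tests whether the current Dijkstra path on $\mcal{L}_\mcal{G}^{(i)}$ is chordless and Step~2 identifies its \emph{first} chord on the fly, so each pass removes one chordal path reactively and the iteration count is governed by how many must be removed, not by any precomputed list.

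The step I expect to be the main obstacle is justifying that this iteration count is at most $K_\mcal{G}$, i.e. that Step~3 is \emph{monotone}: each replication-and-deletion pass must destroy the offending chordal path $\mcal{P}^{(i)}_{\rm to-fix}$ while creating no new chordal path. The delicate point is that duplicating the intermediate vertices of $\mcal{P}^{(i)}_{\rm to-fix}$ could a priori introduce fresh cycles in $\mcal{L}_\mcal{G}^{(i+1)}$; I would argue, following~\cite{ahmed2009shortest}, that because the replicas inherit every incident edge of their originals \emph{except} those touching the two endpoints of $\mcal{P}^{(i)}_{\rm to-fix}$, and because the last original intermediate edge into the endpoint $v_{k_mk_{m+1}}$ is deleted and re-attached only from the replica, every previously chordless path is preserved verbatim while exactly the cycle corresponding to the first chord is severed. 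Establishing that this surgery strictly decreases the number of chordal paths at each iteration (and hence terminates in at most $K_\mcal{G}$ passes) is the crux. The remaining running-time bookkeeping in Step~4 is then routine: using $|\mcal{V}^{(i+1)}_{\rm redo}| \le |\mcal{V}(\mcal{L}_\mcal{G})|$, the warm-started Dijkstra restart stays within the cost of a single line-digraph, so the per-iteration factor does not grow from one pass to the next and the product bound follows.
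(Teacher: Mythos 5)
Your proposal is correct and follows essentially the same route as the paper: it invokes the two line-digraph equivalences (best HD simple path in $\mcal{G}$ $\leftrightarrow$ best FD chordless path in $\mcal{L}_\mcal{G}$, cycles $\leftrightarrow$ chordal paths), reuses the per-iteration Dijkstra cost and the identities $V_{\mcal{L}_\mcal{G}} = |\mcal{E}(\mcal{G})|$, $E_{\mcal{L}_\mcal{G}} \le |\mcal{E}(\mcal{G})| d$, and substitutes $K_\mcal{G} = O(N^\alpha)$ into the already-derived bound, exactly as the paper does. The point you flag as the crux --- that each Step-3 surgery removes the offending chordal path without creating new ones, so the iteration count is bounded by $K_\mcal{G}$ --- is precisely the claim the paper asserts (citing the same reference) in its Step-3 discussion, so your treatment is, if anything, slightly more careful on the one step the paper leaves informal.
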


As a network example for which Corollary~\ref{cor:complexity} applies, we can study the layered network where the relays are arranged as $N_L$ relays per layer over $L$ layers. Every relay can only communicate with the relays in the following layer of relays. It is not difficult to see that for this particular network, the number of cycles in the graph is equal to zero, i.e., $K_\mcal{G}=0$. In addition, the maximum degree $d$ of a vertex is $O(N_L)$ and the number of edges in the network is $\Theta(L N_L^2)$. By substituting these values to the expression in Corollary~\ref{cor:complexity}, we get that the complexity of finding a simple path with the largest HD approximate capacity in a layered network is given by
\begin{align*}
    &O((K_\mcal{G}+1) ( |\mcal{E}({\mcal{G}})| \log |\mcal{E}({\mcal{G}})|  + |\mcal{E}({\mcal{G}})| d)) \\ &\quad= O( LN_L^2\log LN_L^2 + LN_L^2  N_L ) 
    \\ &\quad= O( LN_L^2\log L + 2 LN_L^2\log N_L + LN_L^3) \\
    &\quad= O( LN_L^2\log L + LN_L^3).
\end{align*}


\section{Conclusion}\label{sec:conc}

In this work we proved that, given a network with a source node, a destination node and a number of relays, finding the path from the source to the destination with the largest HD approximate capacity is NP-hard in general. 
This represents a surprising result and it is fundamentally different from the FD counterpart, since the path with the largest FD capacity can always be discovered in polynomial-time.
We also showed that, if the number of cycles inside the network is polynomial in the number of nodes, then a polynomial-time algorithm exists to find the best HD path.

Future work would include discovering alternative sufficient conditions that allow to develop polynomial-time algorithms (similar to the one on the number of cycles presented in this paper) and designing low-complexity algorithms that, even if not optimal, ensure that a significant portion of the best HD path approximate capacity can always be achieved.

\bibliographystyle{IEEEtran}
\bibliography{LineNetwork}

\end{document}